\theoremstyle{plain}\newtheorem{theorem}{Theorem}
\theoremstyle{plain}\newtheorem{corollary}[theorem]{Corollary}
\theoremstyle{plain}\newtheorem{proposition}[theorem]{Proposition}
\theoremstyle{plain}\newtheorem{lemma}[theorem]{Lemma}
\theoremstyle{plain}\newtheorem{definition}{Definition}
\theoremstyle{plain}\newtheorem{remark}{Remark}
\theoremstyle{plain}\newtheorem{example}[remark]{Example}
\let\orgdescriptionlabel\descriptionlabel
\renewcommand*{\descriptionlabel}[1]{%
  \let\orglabel\label
  \let\label\@gobble
  \phantomsection
  \edef\@currentlabel{\ignorespaces #1\unskip}%
  \let\label\orglabel
  \orgdescriptionlabel{#1}%
}
\DeclareSymbolFont{symbolsC}{U}{txsyc}{m}{n}
\DeclareMathSymbol{\multimapboth}{\mathrel}{symbolsC}{"13}
\newcommand{\?}[1]{}
\newcommand{\defeq}{\triangleq}
\newcommand{\defiff}{\stackrel{\triangle}{\iff}}
\newcommand{\mono}{\rightarrowtail}
\newcommand{\emb}{\mathbin{%
  \begin{tikzpicture}[baseline=-.6ex]\draw[right hook->] (0,0) -- (2ex,0);\end{tikzpicture}}}
\newcommand{\pemb}{\mathbin{%
  \begin{tikzpicture}[baseline=-.6ex]\draw[right hook-left to] (0,0) -- (2ex,0);\end{tikzpicture}}}
\newcommand{\pmap}{\rightharpoonup}
\newcommand{\face}[1]{\langle #1 \rangle}
\newcommand{\dom}{\textsf{\textit{dom}}}
\newcommand{\rng}{\textsf{\textit{img}}}
\newcommand{\ar}{\textsf{\textit{ar}}}
\newcommand{\prnt}{\textsf{\textit{prnt}}}
\newcommand{\ctrl}{\textsf{\textit{ctrl}}}
\newcommand{\link}{\textsf{\textit{link}}}
\newcommand{\src}{\mathsf{src}}
\newcommand{\mbb}[1]{\mathbb{#1}}
\newcommand{\msf}[1]{\mathsf{#1}}
\newcommand{\esf}[2]{#1^\mathsf{#2}}
\newcommand{\ephi}[1]{\esf{\phi}{#1}}
\newcommand\restr[2]{{\left.\kern-\nulldelimiterspace#1\vphantom{\big|}\right|_{#2}}}
\newcommand\corestr[2]{{\left.\kern-\nulldelimiterspace#1\vphantom{\big|}\right|^{#2}}}
\newcommand{\prt}{\mbb}
\newcommand{\eprt}[1]{\prt P^\mathsf{\kern1pt#1}}
\newcommand\embrestr[3][\?]{{\left.\kern-\nulldelimiterspace#3\vphantom{\big|}\right|_{#1,#2}}}
\newcommand{\adjto}{\mathrel{{\circ}\!{\to}}}
\title{Distributed execution of bigraphical reactive systems\thanks{
	This work is partially supported by MIUR PRIN project 2010LHT4KM, \emph{CINA}.}}
\author{
	\begin{tabular}{ccccc}
	Alessio Mansutti&\qquad& Marino Miculan&\qquad& Marco  Peressotti\\
	\small\href{mailto:alessio.mansutti@gmail.com}{\tt alessio.mansutti@gmail.com}
	&&
	\small\href{mailto:marino.miculan@uniud.it}{\tt marino.miculan@uniud.it}
	&&
	\small\href{mailto:marco.peressotti@uniud.it}{\tt marco.peressotti@uniud.it}
	\end{tabular}\\[5pt]
	\small	Laboratory of Models and Applications of Distributed Systems \\[-.8ex]
	\small	Department of Mathematics and Computer Science\\[-.8ex]
	\small	University of Udine, Italy\\
}
\date{}
\begin{document}
\maketitle

\begin{abstract}
	The \emph{bigraph embedding} problem is crucial for many
	results and tools about bigraphs and bigraphical reactive systems
	(BRS).  Current algorithms for computing bigraphical embeddings are
	\emph{centralized}, i.e. designed to run locally with a complete
	view of the guest and host bigraphs. In order to deal with large
	bigraphs, and to parallelize reactions, we present a
	\emph{decentralized algorithm}, which distributes both state and
	computation over several concurrent processes. This allows for
	distributed, parallel simulations where non-interfering reactions
	can be carried out concurrently; nevertheless, even in the worst
	case the complexity of this distributed algorithm is no worse than
	that of a centralized algorithm.
\end{abstract}

\section{Introduction}

\emph{Bigraphical Reactive Systems} (BRSs)
\cite{jm:popl03,milner:bigraphbook} are a flexible and expressive
meta-model for ubiquitous computation.  In the last decade, BRSs have
been successfully applied to the formalization of a wide range of
domain-specific calculi and models, from traditional programming
languages to process calculi for concurrency and mobility, from
business processes to systems biology; a non exhaustive list is
\cite{bdehn:fossacs06,bghhn:coord08,bgm:biobig,dhk:fcm,mp:br-tr13,mmp:dais14}.
Recently, BRSs have found a promising applications in
\emph{structure-aware agent-based computing:} the knowledge about the
(physical) world where the agents operate (e.g., drones, robots, etc.)
can be conveniently represented by means of BRSs
\cite{pkss:bigactors,sp:memo14}.
BRSs are appealing also because they provide a range of 
general results and tools, which can be readily instantiated with the
specific model under scrutiny: simulation tools, systematic
construction of compositional bisimulations \cite{jm:popl03},
graphical editors \cite{fph:gcm12}, general model checkers
\cite{pdh:sac12}, modular composition \cite{pdh:refine11}, stochastic
extensions \cite{kmt:mfps08}, etc.

This expressive power stems from the rich structure of
\emph{bigraphs}, which are the states of a bigraphic reactive system.
A bigraph is a compositional data structure describing at once both
the locations and the connections of (possibly nested) system
components.  To this end, bigraphs combine two independent graphical
structures over the same set of \emph{nodes}: a hierarchy of
\emph{places}, and a hypergraph of \emph{links}.  Intuitively, places
represent (physical) positions of agents, while links represent
logical connections between agents.  A simple example is shown in
Figure~\ref{fig:bigraph-comp}.

\begin{figure}[t]
	\centering 
	\includegraphics{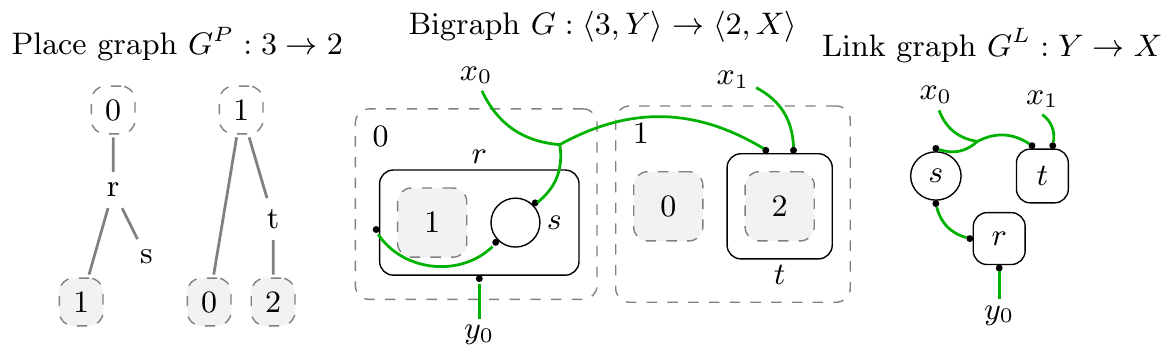}
	\caption{Forming a bigraph from a place graph and a link graph.}
	\label{fig:bigraph-comp}
\end{figure}

The behaviour of a BRS is defined by
a set of \emph{(parametric) reaction rules}, like in graph rewriting
\cite{graphtransformation}.  Applying a reaction rule to a bigraph
corresponds to find an embedding of the rule's \emph{redex} and
replace it with the corresponding \emph{reactum}. Thus, BRSs can be
run (or simulated) by the abstract machine depicted in
Figure~\ref{fig:brs-cycle} (or variants of it).  This machine is
composed by two main modules: the \emph{embedding engine} and the
\emph{reaction engine}.  The former keeps track of available redex
embeddings into the bigraph in the current machine state; the latter
is responsible of carrying out the reactions, in two steps:
\begin{enumerate*}[label=\em(\alph*)]
	\item 
		choosing an occurrence of a redex among those 
		provided by the embedding engine and
	\item 
		updating the machine state by performing the 
		chosen rewrite operation.
\end{enumerate*}

The choice of which reaction to perform is driven by
\emph{user-provided execution policies}.  A possible simple policy is
the random selection of any available reactions, while in
\cite{mmp:dais14} execution policies are based on agent believes,
intentions and goals.  Execution policies are outside the scope of
this paper, and we refer the reader to \cite{perrone:thesis} for other
examples. Here we mention LibBig, an extensible library for
bigraphical reactive systems (available at \url{http://mads.dimi.uniud.it/})
which offers easily customizable execution policies in the form of
cost-based embeddings where costs are defined at the component level
via \emph{attached properties}.

\begin{figure}[t]
	\centering 
	\includegraphics{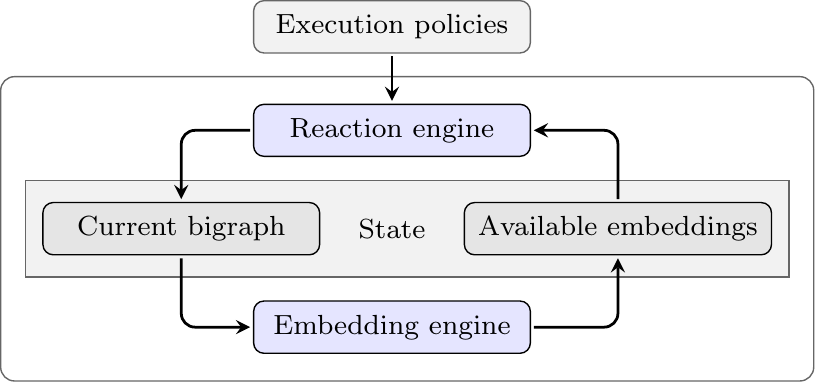}
	\caption{The open cycle of an abstract bigraphical machine.}
	\label{fig:brs-cycle}
\end{figure}

Therefore, computing bigraph embeddings (i.e., finding the occurrences
of a bigraph, called \emph{guest}, inside another one, called
\emph{host}) is a central issue in any implementation of a BRS
abstract machine.  The problem is known to be NP-complete
\cite{bmr:tgc14}, and some algorithms (or reductions) can be found in
the literature \cite{gdbh:implmatch,mp:memo14,sevegnani2010sat}.
However, existing algorithms assume a complete view of both the guest
and the host bigraphs.  This hinders the scalability of BRS execution
tools, especially on devices with low resources (like embedded
ones). Moreover, in a truly distributed setting (like in multi-agent
systems \cite{mmp:dais14}) the bigraph is scattered among many
machines; gathering it to a single ``knowledge manager'' in order to
calculate embeddings and apply the rewriting rules, would be
impractical.

In this paper, we aim to overcome these problems, by introducing an
algorithm for computing bigraphical embeddings in distributed settings
where bigraphs are spread across several cooperating processes. This
decentralized algorithm does not require a complete view of the host
bigraph, but retains the fundamental property of (eventually) computing every
possible embedding for the given host.  Thanks to the distributed
nature of the algorithm, this solution can scale to bigraphs that
cannot fit into the memory of a single process, hence too large to be
handled by existing implementations.  Moreover, the algorithm is
parallelized: several (non-interfering) reductions can be identified
and applied at once. In this paper we consider distributed host bigraphs only
since guest bigraphs are usually redexes of parametric reaction rules and
hence small enough to be handled even in presence
of scarce computational resources.

\begin{figure}[t]
	\centering
	\includegraphics{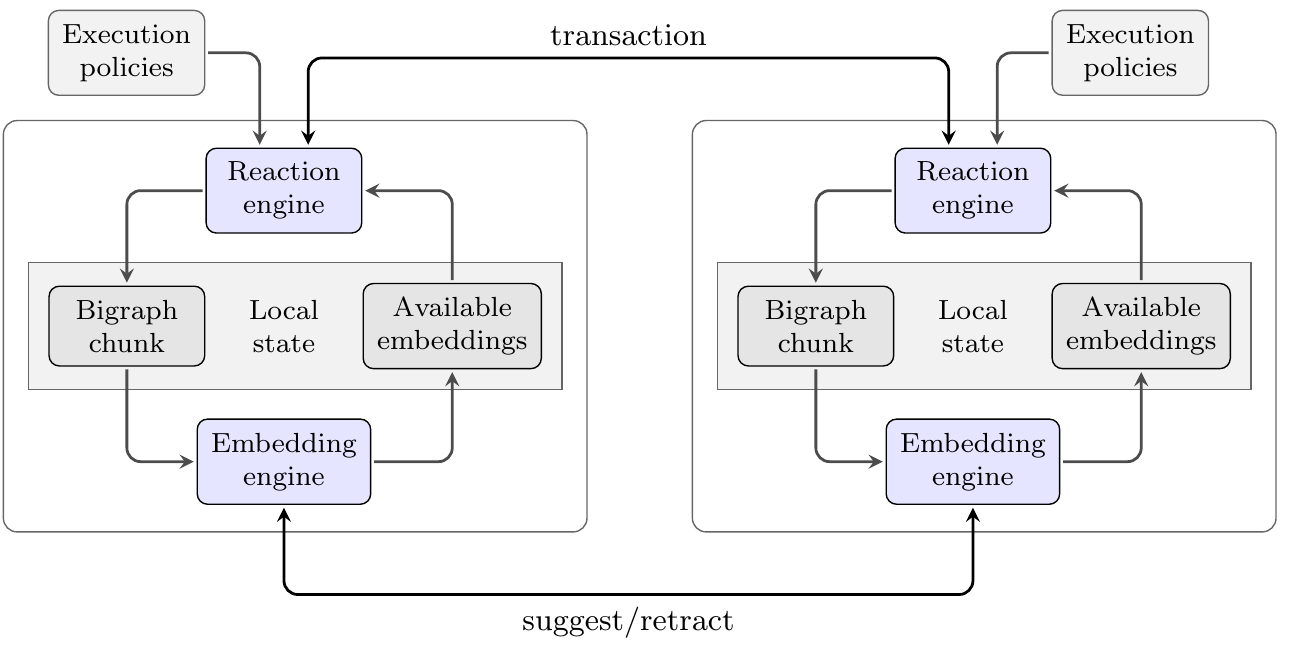}
	\caption{D-BAM: Distributed bigraphical abstract machine.}
	\label{fig:dbrs-arch}
\end{figure}

This algorithm is the core of a decentralized version of the abstract
bigraphical machine illustrated above. The architecture of this new
\emph{distributed bigraphical (abstract) machine} (D-BAM) is in
Figure~\ref{fig:dbrs-arch}.  Both computation and states are
distributed over a family of processes. Each process has only a
partial view of the global state and negotiates updates to its piece
of the global bigraph with its ``neighbouring processes''.  We assume
reliable asynchronous point-to-point communication between reliable
processes; this is a mild assumptions for a distributed system and can
be easily achieved e.g.~over unreliable channels.

This work extends and improves \cite{mpm:gcm14w} in several ways.
First, we introduce a new compact representation of partial
embeddings, reducing both network and memory footprint of the
distributed embedding algorithm; secondly, messages are routed across
the overlay network only to processes that can benefit from their
content (in \cite{mpm:gcm14w} messages were forwarded to the entire
neighbourhood). Moreover, we discuss some other heuristics and
partition strategies.

\paragraph{Synopsis}
In Section~\ref{sec:prelim} we briefly recall bigraphical reactive
systems and bigraph embeddings.  In Section~\ref{sec:pemb-cpemb} we
introduce the notion of \emph{partial bigraph embedding} and the
weaker notion of \emph{candidate} partial bigraph embedding.  In
Section~\ref{sec:dbrs} and Section~\ref{sec:demb} we describe the
D-BAM; in particular we show how to solve the embedding problem at its
core by means of a distributed algorithm which incrementally computes
(candidate) partial bigraph embeddings.  Conclusions and final remarks
are discussed in Section~\ref{sec:concl}.

\section{Bigraphs and their embeddings}
\label{sec:prelim}
In this section we briefly recall the notion of bigraphs, Bigraphical
Reactive Systems (BRS), and bigraph embedding; for more detail we
refer to \cite{milner:bigraphbook}.

\subsection{Bigraphical reactive systems}
\label{sec:brs}
The idea at the core of BRSs is that agents may interact in a
reconfigurable space, even if they are spatially separated.  This
means that two agents may be adjacent in two ways: they may be at the
same \emph{place}, or they may be connected by a \emph{link}.  Hence,
the state of the system is represented by a \emph{bigraph}, i.e., a
data structure combining two independent graphical structures over the
same set of \emph{nodes}: a hierarchy of \emph{places}, and a
hyper-graph of \emph{links}

\begin{definition}[Bigraph {\cite[Def.~2.3]{milner:bigraphbook}}]
	Let $\Sigma$ be a bigraphical signature (i.e.~a set of
	\emph{controls}, each associated with a finite arity).
	A \emph{bigraph} $G$ over $\Sigma$ is an object
	\[
		(V_G, E_G, \ctrl_G, \prnt_G, \link_G):\face{n_G,X_G}\to\face{m_G,Y_G}
	\]
	composed of two substructures (cf.~Figure~\ref{fig:bigraph-comp}): 
	a \emph{place graph}  $G^P=(V_G, \ctrl_G, \prnt_G):{n_G\to m_G}$ and a
	\emph{link graph} $G^L=(V_G,E_G,\ctrl_G,\link_G):{X_G\to Y_G}$.
	The set $V_G$ is a finite set of nodes and to each of them is assigned a
	control in $\Sigma$ by the \emph{control map} $\ctrl_G : V_G\to \Sigma$.
	The set $E_G$ is a finite set of names called \emph{edges}.
	These structures present an inner interface (composed by $m_G$ and 
	$X_G$) and an outer one ($n_G$, $Y_G$) along which can be composed with
	other of their kind as long as they do not share any
	node or edge. In particular, $X_G$ and $Y_G$ are finite sets
	of names and $m_G$ and $n_G$ are finite ordinals (that index \emph{sites} and \emph{roots} respectively).
	On the side of $G^P$, nodes, sites and roots are organized in
	a forest described by the \emph{parent map} $\prnt_G : V_G 
	\uplus m_G \to V_G \uplus n_G$.
	On the side of $G^L$, nodes, edges and names of the inner and outer
	interface forms a hyper-graph described by the \emph{link map} 
	$\link_G : P_G\uplus X_G \to E_G \uplus Y_G$ which is a function from $X_G$ and ports $P_G$ (i.e.~elements of 
	the finite ordinal associated to each node by its control) to 
	edges $E_G$ and outer names $Y_G$.
\end{definition}

The dynamic behaviour of a system is described in terms of
\emph{reactions} of the form $a \rightarrowtriangle a'$ where $a,a'$
are agents, i.e.~bigraphs with inner interface $\face{0,\emptyset}$.
Reactions are defined by means of graph rewrite rules, which are pairs
of bigraphs $(R_L, R_R)$ equipped with a function $\eta$ from the 
sites of $R_R$ to those of $R_L$ called \emph{instantiation rule}.
A bigraphical encoding for the open reaction rule of the Ambient 
Calculus is shown in Figure~\ref{fig:amb-open} where redex and 
reactum are the bigraph on the left and the one on the right respectively
and the instantiation rule is drawn in red. A rule fires when its redex 
can be embedded into the agent; then, the matched part is replaced by 
the reactum and the parameters (i.e.~the substructures determined by 
the redex sites) are instantiated accordingly with $\eta$.

\begin{figure}[t]
  \centering
  \includegraphics{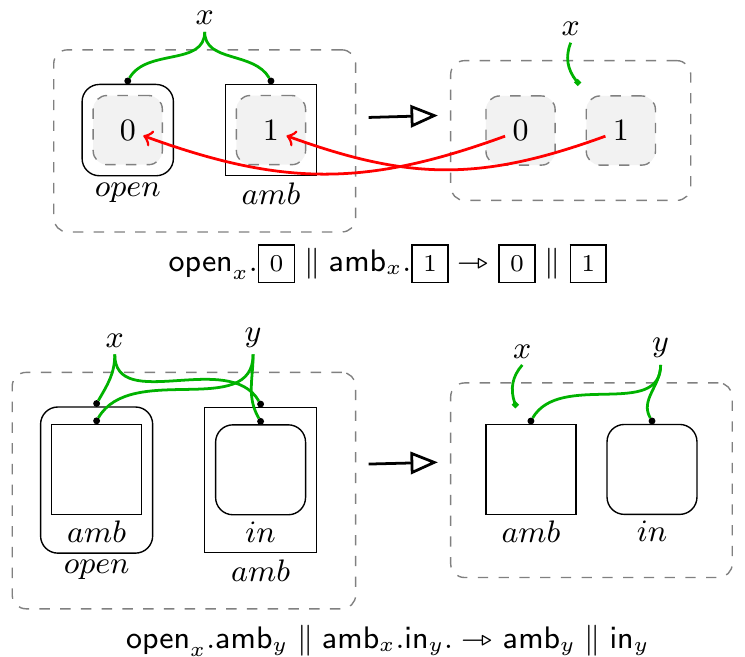}
  \caption{The open reaction rule of the Ambient Calculus (top)
  and an induced reaction.}
  \label{fig:amb-open}
\end{figure}

\subsection{Bigraph embeddings}
\label{sec:emb}
The following definitions are mainly taken from \cite{hoesgaard:thesis},
with minor modification to simplify the presentation of the 
distributed embedding algorithm (cf.~Section~\ref{sec:demb}).
As usual, we will exploit the orthogonality of the link and place graphs, 
by defining \emph{link and place graph embeddings} separately and then 
combine them to extend the notion to bigraphs. 

\vspace{-0.5ex}

\paragraph{Link graph}
Intuitively an embedding of link graphs is a structure preserving map
from one link graph (the \emph{guest}) to another (the \emph{host}). 
As one would expect from a graph 
embedding, this map contains a pair of injections: one for the nodes
and one for the edges (i.e., a support translation). The remaining
of the embedding map specifies how names of the inner and outer 
interfaces should be mapped into the host link graph. Outer names can
be mapped to any link; here injectivity is not required since a 
context can alias outer names. Dually, inner names can mapped to 
hyper-edges linking sets of points in the host link graph and such 
that every point is contained in at most one of these sets.

\begin{definition}[Link graph embedding {\cite[Def~7.5.1]{hoesgaard:thesis}}]\label{def:lge}
	Let $G : X_G \to Y_G$ and $H : X_H \to Y_H$ be two concrete link graphs. 
	A \emph{link graph embedding} $\phi : G \emb H$ is a map
	$\phi \defeq \ephi v \uplus \ephi e \uplus \ephi i \uplus \ephi o$
	(assigning nodes, edges, inner and outer names respectively)
	subject to the following conditions:
	\begin{description}\itemsep=0pt
	\item[(LGE-1)\label{def:lge-1}]
		$\ephi v : V_G \mono V_H$ and $\ephi e : E_G \mono E_H$ are injective;
	\item[(LGE-2)\label{def:lge-2}]
		$\ephi i : X_G \mono \wp(X_H \uplus P_H)$ is fully
                injective: $\forall x\neq x' : \ephi i(x) \cap \ephi i (x') = \emptyset$;
	\item[(LGE-3)\label{def:lge-3}]
		$\ephi o : Y_G \to E_H \uplus Y_H$ in an arbitrary partial map;
	\item[(LGE-4)\label{def:lge-4}]
		$\rng(\ephi e) \cap \rng(\ephi o) = \emptyset$ and $\rng(\ephi{port})\cap \bigcup\rng(\ephi i)  = \emptyset$;
	\item[(LGE-5)\label{def:lge-5}]
		$\ephi p \circ \restr{\link_G^{-1}}{E_G} = \link_H^{-1}\circ \ephi e$;
	\item[(LGE-6)\label{def:lge-6}]
		$\ctrl_G = \ctrl_H \circ \ephi v$;
	\item[(LGE-7)\label{def:lge-7}]
		$\forall p \in X_G \uplus P_G : \forall p' \in (\ephi p)(p) : (\ephi h \circ \link_G)(p) = \link_h(p')$
	\end{description}
	where 
	$\ephi p \defeq \ephi i \uplus \ephi{port}$,
	$\ephi h \defeq \ephi e \uplus \ephi{o}$ and  
	$\ephi{port}:P_G \mono P_H$ is $\ephi{port}(v,i) \defeq (\ephi v(v),i))$.
\end{definition}
The first three conditions are on the single sub-maps of the 
embedding. Condition \ref{def:lge-4} ensures that no components 
(except for outer names) are identified; condition \ref{def:lge-5}
imposes that points connected by the image of an edge are all 
covered. Finally, \ref{def:lge-6} and \ref{def:lge-7} 
ensure that the guest structure is preserved i.e.~node controls 
and point linkings are preserved.

\paragraph{Place graph} Like link graph embeddings, place graph embeddings
are just a structure preserving injective map from nodes along with suitable 
maps for the inner and outer interfaces. In particular, a site is mapped to 
the set of sites and nodes that are ``put under it'' and a root is mapped to
the host root or node that is ``put over it'' splitting the host place graphs 
in three parts: the guest image, the context and the parameter (which are 
above and below the guest image).

\begin{definition}[Place graph embedding {\cite[Def~7.5.4]{hoesgaard:thesis}}]\label{def:pge}
	Let $G : n_G \to m_G$ and $H : n_H \to m_H$ be two concrete place graphs. 
	A \emph{place graph embedding} $\phi : G \emb H$ is a map
	$\phi \defeq \ephi v \uplus \ephi s \uplus \ephi r$
	(assigning nodes, sites and regions respectively)
	subject to the following conditions:
	\begin{description}\itemsep=0pt
	\item[(PGE-1)\label{def:pge-1}]
		$\ephi v : V_G \mono V_H$ is injective;
	\item[(PGE-2)\label{def:pge-2}]
		$\ephi s : n_G \mono \wp(n_H \uplus V_H)$ is fully injective;
	\item[(PGE-3)\label{def:pge-3}]
		$\ephi r : m_G \to V_H \uplus m_H$ in an arbitrary map;
	\item[(PGE-4)\label{def:pge-4}]
		$\rng(\ephi v) \cap \rng(\ephi r) = \emptyset$ and $\rng(\ephi v) \cap \bigcup \rng(\ephi s) = \emptyset$;
	\item[(PGE-5)\label{def:pge-5}]
			$\forall r \in m_G : \forall s \in n_G : \prnt_H^*\circ \ephi r(r) \cap \ephi s(s) = \emptyset$;
	\item[(PGE-6)\label{def:pge-6}]
		$\ephi c \circ \restr{\prnt_G^{-1}}{V_G} = \prnt_H^{-1}\circ \ephi v$;
	\item[(PGE-7)\label{def:pge-7}]
				$\ctrl_G = \ctrl_H \circ \ephi v$;
	\item[(PGE-8)\label{def:pge-8}]
				$\forall c \in n_G \uplus V_G : \forall c' \in \ephi c(c) : 
				(\ephi f \circ \prnt_G)(c) = \prnt_H(c')$;
	\end{description}
	where $\prnt_H^*(c) = \bigcup_{i < \omega} \prnt^i(c)$,
	$\ephi f \defeq \ephi v \uplus \ephi{r}$, and
	$\ephi c \defeq \ephi v \uplus \ephi{s}$.
\end{definition}

Conditions in the above definition follows the structure
of Definition~\ref{def:lge}, the main notable difference is
\ref{def:pge-5} which states that the image of a root
cannot be the descendant of the image of another. 
Conditions \ref{def:pge-1}, \ref{def:pge-2} and \ref{def:pge-3} 
are on the three sub-maps composing the embedding; conditions 
\ref{def:pge-4} and  \ref{def:pge-5} ensure that no components 
are identified; \ref{def:pge-6} imposes surjectivity on children 
and the last two conditions require the guest structure to be 
preserved by the embedding map.

\paragraph{Bigraph}
Finally, bigraph embeddings can now be defined as maps being composed 
by an embedding for the link graph with one for the place graph 
consistently with the interplay of these two substructures. In 
particular, the interplay is captured by a single additional 
condition ensuring that points in the image of an inner names 
reside in the parameter defined by the place graph embedding 
(i.e.~are inner names or ports of some node under a site image).

\begin{definition}[Bigraph embedding {\cite[Def~7.5.14]{hoesgaard:thesis}}]\label{def:bge}
	Let $G : \face{n_G,X_G} \to \face{m_G,Y_G}$ and 
	$H : \face{n_H,X_H} \to \face{m_H,Y_H}$ be two concrete
	bigraphs. A \emph{bigraph embedding} $\phi : G \emb H$
	is a map given by a place graph embedding 
	$\ephi P : G^P\emb H^P$ and a link graph embedding
	$\ephi L : G^L\emb H^L$ subject to the consistency
	condition:
	\begin{description}\itemsep=0pt
		\item[(BGE-1)\label{def:bge-1}]
			$\rng(\ephi i) \subseteq X_H \uplus 
			\{(v,i) \in P_H \mid \exists s \in n_G : k \in \mathbb{N}:
			\prnt_H^k(v) \in \ephi s(s)\}$.
	\end{description}
\end{definition}

\section{Partial and candidate partial bigraph embeddings}
\label{sec:pemb-cpemb}

In this Section we introduce the notion of \emph{partial bigraph
  embeddings}.  We show that for a given pair of guest and host
bigraphs, the set of their partial embeddings is endowed with a
``almost atomic'' meet-semilattice.  This structure will play a
central r\^ole in the algorithm presented in Section~\ref{sec:demb}.
We then consider also the situation when we know only a part of the
codomain of a partial embedding, by introducing the notion of
\emph{candidate partial embedding}.

\subsection{Partial bigraph embeddings}
\label{sec:pemb}
Basically,
a partial bigraph embedding is a partial map subject to the same
conditions of a total embedding (Definition \ref{def:bge}) up-to
partiality.

\begin{definition}[Partial bigraph embedding]\label{def:pbge}
	Let $G : \face{n_G,X_G} \to \face{m_G,Y_G}$ and 
	$H : \face{n_H,X_H} \to \face{m_H,Y_H}$ be two concrete
	bigraphs. A \emph{partial bigraph embedding} $\phi : G \pemb H$
	is a partial map subject, where defined, to the same conditions of Definition~\ref{def:bge}.
\end{definition}

As we will see in Section~\ref{sec:demb}, partial embeddings represent
the partial or intermediate steps towards a total embedding.  This
notion of ``approximation'' is reflected by the obvious ordering given
by the point-wise lifting of the anti-chain order to partial maps.
In particular, given two partial embeddings $\phi,\psi : G \pemb H$ we say that:
\begin{equation}
	\label{eq:def-emb-ord}
	\phi \sqsubseteq \psi \defiff \forall x \in \dom(\phi)\, 
	\phi(x) \neq \perp \implies \psi(x) = \phi(x)
\text{.}\end{equation}
This definition extends, for any given pair of concrete bigraphs $G$ and $H$,
to a partial order over the set of partial bigraph embeddings of
$G$ into $H$. It is easy to check that the entirely undefined embedding 
$\varnothing$ is the bottom of this structure and that meets are 
always defined:
\[
	\phi \sqcap \psi \defeq \lambda x.
	\begin{cases}
		\phi(x) & \text{if } \phi(x) = \psi(x) \\
		\perp & \text{otherwise} 
	\end{cases}
\]
Likewise, joins, where they exist, are defined as follows:
\[
	\phi \sqcup \psi \defeq \lambda x.
	\begin{cases}
		\phi(x) & \text{if } \phi(x) \neq \perp \\
		\psi(x) & \text{if } \psi(x) \neq \perp \\
		\perp &  \text{otherwise} 
	\end{cases}
\]
Clearly $\phi$ and $\psi$ have to coincide where are both defined and
their join $\phi \sqcup \psi$ is defined iff 
it does not violate any condition in Definition~\ref{def:pbge}.

The set of partial embeddings for a given guest $G$ and host $H$ is an
\emph{meet-semilattice}.  Moreover, an embedding can be represented as
the meet of a finite set of ``basic'' elementary partial embeddings,
i.e.~suitable elements from $G\times H$.  This suggests to use these
elementary partial embeddings as a compact representation for
(partial) embeddings.  Although elementary partial embeddings may
remind \emph{atomic elements} in meet-semilattices, they are not
really atomic. In fact, a partial embedding whose domain contains a
site (or an inner name) has to map it to the empty-set in order to be
minimal (and hence an atom); for this reason, a partial embedding
mapping a site to something different than $\emptyset$ could not be
described as the join of atoms.

This observation leads us to introduce the following definition.
\begin{definition}[(Almost) atomic partial embedding]
	\label{def:almost-atomic}
	A partial embedding $\alpha : G \pemb H$ is said to be
	\emph{(almost) atomic} whenever the following implication 
	holds true:
	\[
		\psi \sqsubseteq \alpha 
		\implies 
		\psi = \varnothing 
		\lor
		\exists!s \in n_G \uplus X_G.\psi(s) = \emptyset
		\text{.}
	\]
	The set of atoms below a partial embedding $\phi$
	is called \emph{base of $\phi$} and is denoted as
	$At(\phi)$.	The set of all atomic partial embeddings of
	$G$ into $H$ is denoted as $At_{G,H}$ (we shall drop the
	subscripts when confusion seems unlikely).
\end{definition}

\begin{proposition}[Base]
	\label{prop:almost-atomic}
	Let $\phi : G \pemb H$ be a partial embedding.
	There exists a minimal and finite family $At(\phi)$ of 
	(almost) atomic partial embeddings
	whose join is $\phi$.	
\end{proposition}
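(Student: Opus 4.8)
The plan is to exhibit $At(\phi)$ concretely as a family of ``least'' sub-embeddings of $\phi$ and then check, in order, that its members are (almost) atomic, that their join is $\phi$, and that no proper subfamily already joins to $\phi$; finiteness will come for free from the concreteness of $G$ and $H$. The first thing I would prove is that restricting a partial embedding to a smaller domain is again a partial embedding: each clause of Definitions~\ref{def:lge}, \ref{def:pge} and \ref{def:bge}, read ``where defined'' as prescribed by Definition~\ref{def:pbge}, only constrains elements that must be in the domain, so deleting elements cannot create a violation. Hence, for a fixed $x\in\dom(\phi)$, the set $S_x\defeq\{\psi \mid \psi\sqsubseteq\phi,\ x\in\dom(\psi)\}$ is non-empty (it contains $\phi$) and closed under the meet $\sqcap$ (any two members agree on $x$, both sending it to $\phi(x)\neq\perp$); since the whole poset of partial embeddings of $G$ into $H$ is finite, $S_x$ has a least element $\phi_{\langle x\rangle}$. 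Concretely, $\phi_{\langle x\rangle}$ is the closure of the single assignment $x\mapsto\phi(x)$ under the dependencies that Definition~\ref{def:pbge} forces inside $\dom(\phi)$ — the essential one being that $\ephi{port}$ is determined by $\ephi v$, so a node drags in exactly its ports and a port drags in its node, whereas edges, outer names, roots, sites and inner names carry no such obligation and give singletons. I then take $At(\phi)\defeq\{\phi_{\langle x\rangle}\mid x\in\dom(\phi)\}$, which is finite because $\dom(\phi)$ is, and which one verifies to coincide with the set of atoms lying below $\phi$.

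Next I would show that each $\alpha=\phi_{\langle x\rangle}$ is (almost) atomic. Let $\psi\sqsubseteq\alpha$. If $x\in\dom(\psi)$ then $\psi\in S_x$, so $\alpha\sqsubseteq\psi$ by minimality of $\phi_{\langle x\rangle}$ and therefore $\psi=\alpha$. Otherwise $x\notin\dom(\psi)$, and using the explicit shape of $\alpha$ one checks that $\psi$ can only be $\varnothing$ or a lone site/inner name of $\alpha$ whose image happens to be the empty set: a node cluster collapses entirely once its node leaves, a singleton collapses entirely once its generator leaves, and a site or inner name with non-empty image cannot be weakened to a non-empty $\psi\sqsubseteq\alpha$ because $\sqsubseteq$ matches values exactly (cf.~Equation~\eqref{eq:def-emb-ord}). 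This is precisely the alternative $\exists!\,s\in n_G\uplus X_G.\ \psi(s)=\emptyset$ appearing in Definition~\ref{def:almost-atomic}, the case $\psi=\varnothing$ being the other one; the site/inner-name possibility is exactly what the qualifier ``almost'' accounts for.

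Finally, the join and the minimality. Since every member of $At(\phi)$ is $\sqsubseteq\phi$, any two of them agree where both are defined, so $\bigsqcup At(\phi)$ exists and is again $\sqsubseteq\phi$; its domain equals $\bigcup_{x\in\dom(\phi)}\dom(\phi_{\langle x\rangle})=\dom(\phi)$, because $x\in\dom(\phi_{\langle x\rangle})$ for every $x$, and there it agrees with $\phi$, whence $\bigsqcup At(\phi)=\phi$. For minimality, I would observe that the members of $At(\phi)$ form an antichain — distinct generators give clusters/singletons with disjoint domains, so none contains another — and that each $\phi_{\langle x\rangle}$ carries a \emph{private} element (its generator, or the hosting node when the generator is a port) lying in no other member; hence discarding any member strictly shrinks $\bigcup_x\dom(\phi_{\langle x\rangle})$ and the surviving join is $\sqsubsetneq\phi$.

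The one genuinely delicate point is the atomicity step: one must go through Definition~\ref{def:pbge} element-kind by element-kind, determine exactly which guest elements are dragged in, and verify that once the generator is removed nothing survives except possibly a single site or inner name forced to $\emptyset$ — that is, that the least sub-embeddings are indecomposable up to the site/inner-name caveat built into Definition~\ref{def:almost-atomic}. Everything else is routine bookkeeping over a finite poset.
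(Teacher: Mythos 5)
Your construction coincides with the paper's on most of the domain, but it diverges exactly where the proposition has real content: sites and inner names, whose images under $\ephi s$ and $\ephi i$ are \emph{sets}. For a site $s\in n_G$ with $|\phi(s)|\geq 2$, your ``least sub-embedding whose domain contains $s$'' is the single assignment $s\mapsto\phi(s)$ carrying the whole image set. The paper instead puts into $At(\phi)$ one almost-atom $\corestr{\phi}{w}$ (i.e.\ $s\mapsto\{w\}$) for \emph{each} $w\in\phi(s)$; this is the second clause of its definition of $At(\phi)$, and it is the entire reason the qualifier ``almost'' and Definition~\ref{def:almost-atomic} exist. Your element $s\mapsto\phi(s)$ is not (almost) atomic in the intended sense: each $\corestr{\phi}{w}$ sits strictly below it and is neither $\varnothing$ nor a site sent to $\emptyset$, so $s\mapsto\phi(s)$ decomposes further. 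Your defence of its indecomposability rests on reading $\sqsubseteq$ as exact equality of set values, but under that reading the join could never reassemble a multi-element image from smaller pieces, the co-restriction clause of the paper's proof would be meaningless, and — more importantly for the rest of the paper — atoms would no longer be ``suitable elements from $G\times H$'', which is what the atom-graph encoding and the $\mathbf O(|G|^2\cdot|H|^2)$ space bound of Section~\ref{sec:demb} depend on. The order and join on the set-valued components must be inclusion and union (this is admittedly under-specified in Equation~\eqref{eq:def-emb-ord}, but it is forced by the proof and by everything built on it), and then your family is not a family of almost-atoms.

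The remainder of your argument is sound and essentially the paper's: restriction to a sub-domain preserves the ``where defined'' conditions, so the single-element restrictions $\restr{\phi}{u}$ for $u\in V_G\uplus m_G\uplus E_G\uplus Y_G$ (and for sites/inner names with $\phi(u)=\emptyset$) are legitimate, finiteness is immediate, and minimality for those members follows from each carrying a private domain element. (Your worry about nodes ``dragging in'' ports is moot: $\ephi{port}$ is derived from $\ephi v$ and ports are not in $\dom(\phi)$.) To repair the proof, replace $\phi_{\langle s\rangle}$ for each site or inner name $s$ with nonempty image by the family $\{\corestr{\phi}{w}\mid w\in\phi(s)\}$, check almost-atomicity of each singleton co-restriction (only $\varnothing$ and $s\mapsto\emptyset$ lie strictly below), and note that minimality for these members is witnessed not by a private domain element but by a private \emph{image} element: dropping $\corestr{\phi}{w}$ strictly shrinks the reconstructed image of $s$.
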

\begin{proof}
	Let $At(\phi)$ be the set of (almost) atomic partial embeddings 
	given by the union of:
	\begin{itemize}	
		\item
		$\left\{
			\restr{\phi}{u} \,\middle|\, 
			u \in V_G \uplus m_G \uplus E_G \uplus Y_G
		\right\}$,
		\item
		$\left\{
			\corestr{\phi}{w} \,\middle|\, 
			u \in n_G \uplus X_G \land w \in \phi(u)
		\right\}$, and
		\item
		$\left\{
			\restr{\phi}{u} \,\middle|\, 
			u \in n_G \uplus X_G \land \phi(u) = \emptyset
		\right\}$
	\end{itemize}
	where $\corestr{\phi}{w}$ denotes co-restriction.
	Then $\bigsqcup At(\phi) = \phi$ and 
	$\bigsqcup S \sqsubset \phi$ for any
	$S \subset At(\phi)$.
\end{proof}

\subsection{Candidate partial embeddings}
\label{sec:cpemb}
 A \emph{candidate partial embedding} is a partial map $\rho : G \pmap
H$ with the same domain and codomain of an embedding of $G$ into
$H$.  A \emph{candidate embedding} is a total map with
suitable domain and codomain. Note that every candidate defined only
on a single element is a partial embedding.

The notion of candidate partial embedding is accessory to the
decentralized algorithm we presents in Section~\ref{sec:demb}.  In
fact, families of partial embeddings are sent over the network as
graphs whose vertexes are atoms and whose edges represents admissible
joins. Joins are not transitive and some of the conditions of bigraph
embeddings cannot be checked by only looking at pairs of atoms and
their immediate neighbourhood, as we show in
Theorem~\ref{thm:small-checks} and Theorem~\ref{thm:big-checks}.

Before we present this result let us present \ref{def:lge-5} and
\ref{def:pge-6} in a more convenient (but equivalent) form,
that points out the conditions failing to be ``locally verifiable''.
\begin{description}\itemsep=0pt
	\item[(LGE-5a)\label{def:lge-5a}] 
		$\forall e \in E_G \forall x \in P_G \uplus X_G ( x \in \link^{-1}_G(e) \iff \ephi p(x)\subseteq \link^{-1}_H(\ephi e(e)))$
	\item[(LGE-5b)\label{def:lge-5b}] 
		$\forall e \in E_G \forall y \in \link^{-1}_H(\ephi e(e)) \exists x \in P_G \uplus X_G ( y \in \ephi p(x) )$
	\item[(PGE-6a)\label{def:pge-6a}] 
			$\forall v \in V_G \forall s \in n_G \uplus V_G ( s \in \prnt^{-1}_G(v) \iff \ephi c(s)\subseteq \prnt^{-1}_H(\ephi e(v)))$
	\item[(PGE-6b)\label{def:pge-6b}] 
			$\forall v \in V_G \forall y \in \prnt^{-1}_H(\phi^v(v)) \exists s \in n_G \uplus V_G ( y \in \phi^c(s) )$
\end{description}

\begin{theorem}\label{thm:small-checks}
	Let $\rho : G \to H$ be a candidate embedding and let
	$\alpha_1, \dots, \alpha_n$ the atoms forming it.
	$\rho : G \to H$ satisfies conditions
	\hyperref[def:lge]{(LGE-1-5a,6,7)} and \hyperref[def:pge]{(PGE-1-4,6a,7,8)}
	if, and only if,
        \vspace{-1ex}
	\begin{enumerate}[label=\em(\alph*)]\itemsep=0pt
		\item
			$\forall i$ $\alpha_i$ satisfies
			\ref{def:lge-3}, \ref{def:lge-6},
			\ref{def:pge-3}, and \ref{def:pge-7};
		\item
			$\forall i,j$ s.t.~the candidate $\alpha_i \sqcup \alpha_j$ 
			satisfies \hyperref[def:lge]{(LGE-1,2,4,5a,7)} and \hyperref[def:pge]{(PGE-1,2,4,6a,8)};
        \vspace{-1ex}
	\end{enumerate}
	and each check involves at most the components of $H$ adjacent
	to the image of $\alpha_i$ and $\alpha_j$.
\end{theorem}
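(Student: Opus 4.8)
The claim splits the conditions into two groups: those that are "purely local" (checkable on individual atoms or on pairs of atoms together with the host neighbourhood they touch), and — implicitly — those that are not, which are deferred to Theorem~\ref{thm:big-checks}. My plan is to go condition by condition through the list \hyperref[def:lge]{(LGE-1--5a,6,7)} and \hyperref[def:pge]{(PGE-1--4,6a,7,8)}, and for each show it is equivalent to the conjunction of the stated single-atom checks (a) and pairwise checks (b). Throughout I will use the decomposition $\rho = \bigsqcup_i \alpha_i$ into atoms from Proposition~\ref{prop:almost-atomic}: since $\rho$ is a candidate \emph{embedding} (total, with the interface structure of an embedding), every element of $\dom(\rho)$ lies in the domain of exactly one $\alpha_i$ when it is a node/edge/outer-name, and each point/site contributes in the controlled way described in the proof of Proposition~\ref{prop:almost-atomic}.

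First I would handle the conditions that are genuinely about a single component of the guest. \ref{def:lge-6} ($\ctrl_G = \ctrl_H \circ \ephi v$), \ref{def:pge-7}, \ref{def:lge-3} (that $\ephi o$ is an arbitrary partial map — a vacuous constraint) and \ref{def:pge-3} are pointwise in the guest node (resp.\ outer name / root), hence hold for $\rho$ iff they hold for each $\alpha_i$ that carries that node; this gives exactly item~(a). Injectivity conditions \ref{def:lge-1}, \ref{def:pge-1}, full-injectivity \ref{def:lge-2}, \ref{def:pge-2}, and the disjointness-of-images conditions \ref{def:lge-4}, \ref{def:pge-4} are all statements about \emph{pairs} of guest elements $x\neq x'$: e.g.\ $\ephi v$ is injective iff no two distinct guest nodes get the same host node, which is visible in $\alpha_i \sqcup \alpha_j$ for the (unique) $i,j$ carrying $x,x'$; similarly $\rng(\ephi e)\cap\rng(\ephi o)=\emptyset$ and the port/inner-name disjointness are pairwise. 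Note $\alpha_i \sqcup \alpha_j$ is itself a candidate (not necessarily an embedding), which is why the theorem phrases (b) as "the candidate $\alpha_i\sqcup\alpha_j$ satisfies …" rather than "is an embedding". These reductions give the pairwise half of (b) for (LGE-1,2,4) and (PGE-1,2,4).

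Next come the two "reformulated" conditions \ref{def:lge-5a} and \ref{def:pge-6a} and the structure-preservation conditions \ref{def:lge-7}, \ref{def:pge-8}. The point of having rewritten \ref{def:lge-5} as \ref{def:lge-5a}\,/\,\ref{def:lge-5b} is precisely that \ref{def:lge-5a} is the "$\subseteq$ both directions between $\link_G^{-1}(e)$ and the host preimage" part, which for a fixed edge $e$ only mentions $\ephi e(e)$, the points $\link_H^{-1}(\ephi e(e))$ adjacent to it, and the guest points linked to $e$ — all of these are either inside one atom (the atom carrying $e$, for the node/site image data) or split between two atoms (the atom for $e$ and the atom for a point $x$), so \ref{def:lge-5a} for $\rho$ reduces to \ref{def:lge-5a} holding on each relevant $\alpha_i$ and each relevant pair $\alpha_i\sqcup\alpha_j$; the surjectivity tail \ref{def:lge-5b} is the part that is \emph{not} local and is left out (it lands in Theorem~\ref{thm:big-checks}). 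The same discussion applies verbatim to \ref{def:pge-6a} versus \ref{def:pge-6b}. For \ref{def:lge-7}: it is the universally quantified statement "for every guest point $p$ and every $p'\in\ephi p(p)$, $(\ephi h\circ\link_G)(p)=\link_H(p')$"; the antecedent mentions only $p$, the image $\ephi p(p)$, and the link $\link_G(p)$ with its image under $\ephi h$ — i.e.\ at most the atom carrying $p$ and the atom carrying $\link_G(p)$ — so again it is a pairwise check. Likewise \ref{def:pge-8}. Collecting the per-atom consequences into (a) and the per-pair consequences into (b) finishes the "only if" direction; the "if" direction is immediate because $\rho = \bigsqcup_i\alpha_i$ and each listed condition is, as just argued, the conjunction over the relevant single atoms and pairs.

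Finally I would argue the locality clause — "each check involves at most the components of $H$ adjacent to the image of $\alpha_i$ and $\alpha_j$" — which in each case above was already observed alongside the reduction: the host-side data appearing in any of these checks is $\ephi e(e)$ together with $\link_H^{-1}$ of it (link-adjacent points), $\ephi v(v)$ together with $\prnt_H^{-1}$ of it (place-adjacent children), and the images $\ephi s(s)$, $\ephi r(r)$, $\ephi o(y)$ which are recorded inside the atoms themselves; none of these requires exploring $H$ beyond one $\link_H$/$\prnt_H$ step from an atom's image.

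\medskip
\noindent\textbf{Anticipated main obstacle.}
The delicate point is not the "local" conditions but making precise \emph{which} conditions are safe to include. \ref{def:lge-5} in its original form ($\ephi p\circ\restr{\link_G^{-1}}{E_G}=\link_H^{-1}\circ\ephi e$) bundles together a "$\subseteq$" part that is pairwise-checkable and a "$\supseteq$" (covering/surjectivity) part that genuinely needs a global view of all atoms touching a given host edge at once — two atoms are not enough to certify that \emph{every} point of $\link_H^{-1}(\ephi e(e))$ is hit. The whole reason the excerpt first restates \ref{def:lge-5} as \ref{def:lge-5a}+\ref{def:lge-5b} (and \ref{def:pge-6} as \ref{def:pge-6a}+\ref{def:pge-6b}) is to surgically separate these; the real work of the proof is verifying that \ref{def:lge-5a} alone (resp.\ \ref{def:pge-6a} alone) really is equivalent to its single-atom and pairwise instances, with \emph{no} hidden global dependency — in particular checking that the $\subseteq$-direction $x\in\link_G^{-1}(e)\Rightarrow\ephi p(x)\subseteq\link_H^{-1}(\ephi e(e))$ is per-pair, and that the reverse inclusion inside \ref{def:lge-5a} (the "$\Leftarrow$" of the biconditional, i.e.\ $\ephi p(x)\subseteq\link_H^{-1}(\ephi e(e))\Rightarrow x\in\link_G^{-1}(e)$) is also per-pair because it is about a single $x$ and a single $e$. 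Getting the boundary exactly right here — so that everything attributed to (a)/(b) is honestly pairwise-local, and the leftovers (\ref{def:lge-5b}, \ref{def:pge-6b}, plus \ref{def:pge-5} and the descendant/acyclicity-type conditions, and \ref{def:bge-1}) are exactly what Theorem~\ref{thm:big-checks} will have to recover — is the crux.
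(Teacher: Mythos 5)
Your proposal is correct and matches the paper's argument: the paper's own proof is a short sketch observing that each listed condition can be falsified by exhibiting at most two atoms and that its negation involves at most one step along $\prnt_H$ or $\link_H$, detailing only the \ref{def:lge-5a} case via two witness atoms — exactly the decomposition into per-atom and per-pair instances (stated contrapositively) that you carry out condition by condition. Your identification of the \ref{def:lge-5a}/\ref{def:lge-5b} (and \ref{def:pge-6a}/\ref{def:pge-6b}) split as the crux is precisely the point the paper's sketch is built around.
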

\begin{proof}[Proof (Sketch)]
	Its easy the above conditions can be falsified by providing at most
	two atoms and that the negated formula of each condition involves
	at most one step along $\prnt_H$ or $\link_H$. As an example we detail 
	the case of \ref{def:lge-5a} leaving the others to the reader.
	If $\rho$ does not satisfy \ref{def:lge-5a}, then there are $e \in E_G$ 
	and $x \in P_G \uplus X_G$ s.t.:
	\begin{equation}\tag{$\dagger$}\label{def:nlge-5a}
		(x \in \link^{-1}_G(e) \land 
		\esf{\rho}{p}(x)\not\subseteq 
		\link^{-1}_H(\esf{\rho}{e}(e))) \lor (x \not\in \link^{-1}_G(e) \land \esf{\rho}{p}(x)\subseteq \link^{-1}_H(\esf{\rho}{e}(e)))
	\end{equation}
	Let $\bar{e} \in E_G$ and $\bar{x} \in P_G \uplus X_G$ two witnesses 
	of \eqref{def:nlge-5a} and consider the atomic partial embeddings 
	$\alpha_1 = \corestr{\rho}{\bar e}$ and $\alpha_2 = \corestr{\rho}{\bar x}$.
	Clearly $\alpha_1, \alpha_2 \sqsubseteq \rho$ and either
	$\bar{x} \in \link^{-1}_G(\bar{e}) \land \bar{y} \not\in 
	\link^{-1}_H(\bar{d})$ or $\bar{x} \not\in \link^{-1}_G(\bar{e})
	\land \bar{y} \in \link^{-1}_H(\bar{d})$.
\end{proof}

\begin{theorem}\label{thm:big-checks}
	Verifying whether a candidate satisfies conditions 
	\ref{def:bge-1}, \ref{def:lge-5b}, \ref{def:pge-5} and \ref{def:pge-6b}
	may require more than two atoms	or the neighbourhood of their images.
\end{theorem}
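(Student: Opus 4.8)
The plan is to establish each clause by exhibiting, for every one of the four conditions, a guest $G$, a host $H$ and a candidate $\rho : G \to H$ for which neither a single atom of $\rho$ nor a pair of atoms of $\rho$ --- even together with the host components adjacent to their images --- carries enough information to decide that condition. The examples split into two families, matching the two clauses of the statement.

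For \ref{def:lge-5b} and \ref{def:pge-6b} I would show that \emph{more than two atoms} are needed. For \ref{def:lge-5b}, take $H$ with a single edge $e_H$ whose link consists of the ports of $n\ge 2$ distinct unary nodes $v_1,\dots,v_n$, and $G$ with a single edge $e_G$ linked to the ports of $n$ distinct unary nodes $u_1,\dots,u_n$; put $\esf{\rho}{e}(e_G)=e_H$ and $\esf{\rho}{v}(u_k)=v_k$ (and place all nodes under a single root). A routine check shows $\rho$ is a genuine total bigraph embedding, hence satisfies \ref{def:lge-5b}; its atoms (Proposition~\ref{prop:almost-atomic}) are $\restr{\rho}{e_G}$ and the $\restr{\rho}{u_k}$. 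Yet neither $\restr{\rho}{e_G}$ alone nor the join $\restr{\rho}{e_G}\sqcup\restr{\rho}{u_1}$ --- a candidate built from only two atoms --- satisfies \ref{def:lge-5b}, since the host points linked to $e_H$ include the ports of $v_2,\dots,v_n$, none of which is covered. Thus \ref{def:lge-5b} is not equivalent to the conjunction of its instances on single atoms and on pairs of atoms: a pairwise test in the style of Theorem~\ref{thm:small-checks}(b) would reject the valid embedding $\rho$, so deciding \ref{def:lge-5b} genuinely requires consulting the whole base ($n+1>2$ atoms). A mirror-image construction --- a host node with $n$ node-children and a guest node with $n$ node-children --- yields the same conclusion for \ref{def:pge-6b}.

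For \ref{def:pge-5} and \ref{def:bge-1} I would instead show that the \emph{adjacent neighbourhood is not enough}. For \ref{def:pge-5}, take $G$ with two roots $r_1,r_2$, one site $s$ with $\prnt_G(s)=r_1$, and no nodes or edges; and $H$ with no edges, containing a node $a$, a node $c$ with $\prnt_H(c)=a$, and a long descending chain $d_1,\dots,d_{k-1},b$ with $\prnt_H(d_1)=c$, $\prnt_H(d_{i+1})=d_i$, $\prnt_H(b)=d_{k-1}$. Set $\esf{\rho}{r}(r_1)=a$, $\esf{\rho}{r}(r_2)=b$, $\esf{\rho}{s}(s)=\{c\}$. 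Because $G$ has no nodes or edges, $\rho$ meets every hypothesis of Theorem~\ref{thm:small-checks} and also \ref{def:pge-8}; the \emph{only} violated condition is \ref{def:pge-5}, because $c\in\prnt_H^*(b)\cap\esf{\rho}{s}(s)$. This violation involves only the two atoms $\restr{\rho}{r_2}$ and $\restr{\rho}{s}$, but exposing it requires traversing the chain $b,d_{k-1},\dots,d_1,c$ of unbounded length $k$, whose interior nodes are in the image of no atom; in particular it is invisible in the host components adjacent to $\esf{\rho}{r}(r_2)$ and $\esf{\rho}{s}(s)$, so a procedure limited to such adjacent data would wrongly certify \ref{def:pge-5}. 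A similar construction --- a guest inner name $x$ and a guest site $s$, with the host port in $\esf{\rho}{i}(x)$ lying outside the deep subtree rooted at $\esf{\rho}{s}(s)$ --- shows that deciding \ref{def:bge-1} may require evaluating $\prnt_H^k$ for unbounded $k$.

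The main obstacle is not the geometry of these examples but the bookkeeping: for each $\rho$ one must verify that it violates \emph{only} its intended condition while satisfying all the conditions that Theorem~\ref{thm:small-checks} shows to be locally checkable, since otherwise a local procedure could legitimately discard $\rho$ for an unrelated, locally verifiable reason, and the example would fail to isolate the hardness of the four target conditions. This is precisely why the guests are taken with as few nodes and edges as possible, so that the competing conditions hold vacuously. The remaining, more conceptual point --- which I would argue informally, the statement itself being informal --- is that no smarter bounded-radius test helps either: for \ref{def:lge-5b} and \ref{def:pge-6b} the missing witness is spread over arbitrarily many atoms of the base, while for \ref{def:pge-5} and \ref{def:bge-1} it sits at arbitrarily large $\prnt_H$-distance from the images of the two relevant atoms.
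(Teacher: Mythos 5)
Your proposal is correct and takes essentially the same approach as the paper's (much terser) proof sketch: the paper merely observes that \ref{def:pge-5} and \ref{def:bge-1} involve the transitive closure of $\prnt_H$ while \ref{def:lge-5b} and \ref{def:pge-6b} mix universal and existential quantifiers, and your explicit star and chain constructions are exactly the witnesses those two observations implicitly appeal to. There is no gap; you have simply made the sketch concrete.
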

\begin{proof}[Proof (Sketch)]
	Conditions \ref{def:pge-5} and \ref{def:bge-1} contain the transitive closure of $\prnt_H$.
	Conditions \ref{def:lge-5b} and \ref{def:pge-6b}
	contain existential and universal quantifications at the same time.
\end{proof}

\begin{definition}
	\label{def:locally-checkable}
	\label{def:ancestor-checkable}
	Conditions \hyperref[def:lge]{(LGE-1-5a,6,7)} and
	\hyperref[def:pge]{(PGE-1-4,6a,7,8)} are called 
	\emph{locally checkable}, and the candidates satisfying 
	them are said \emph{locally checked}.
	Conditions \ref{def:pge-5} and \ref{def:bge-1} are called 
	\emph{ancestor checkable}, and the candidates satisfying them are said
	\emph{ancestor checked}.
\end{definition}

\section{State, overlay and reactions}
\label{sec:dbrs}

This section illustrates how a bigraph is distributed between a
processes family and how it is maintained and updated.  First, we
formalize the idea of a ``distributed bigraph'' and show how a
partition of the global system state defines a \emph{semantic overlay
  network}. The r\^ole of this network is crucial for the embedding
algorithm since communication will follow this structure.  Finally, we
describe how reactions are carried out concurrently and consistently.

In the following, let $\msf{Proc}$ denote the family of processes
forming the distributed bigraphical machine under 
definition and let $H$ be a generic concrete bigraph 
$(V_H, E_H, \ctrl_H, \prnt_H, \link_H):\face{n_H,X_H}\to\face{m_H,Y_H}$
over a given signature $\Sigma$.

\subsection{State partition}
\label{sec:dstate}
Intuitively, a partition of the shared state $H$ is a map
assigning each component of the bigraph $H$ to the process in charge of maintaining it.

\begin{definition}[State partition]\label{def:state-part}
	A partition of (the shared state) $H$ over $\msf{Proc}$
	is a map $\prt P : H \to \msf{Proc}$ assigning each component of $H$ to some process.
	In particular, $\prt P$ is given by the (sub)maps $\eprt{v}$, $\eprt{e}$, $\eprt{s}$, $\eprt{r}$, $\eprt{i}$, and $\eprt{o}$ on vertices, edges, sites, roots, inner names, and outer names respectively.
	Every component of $H$ in the pre-image of a process is said to be \emph{held by} or \emph{local to} that process.
	Ports are mapped into the process holding their node i.e.~$\prt P((v,i)) \defeq \prt P(v)$.
\end{definition}
State partitions define a notion of \emph{locality} or \emph{ownership} for bigraphs distributed across the given family of processes by a partition.
This notion extends directly to embeddings.
\begin{definition}[Local partial embedding]\label{def:local-emb}
	Let $\phi : G \pemb H$ be a partial embedding and let 
	$\prt P : H \to \msf{Proc}$ be a partition. 
	The \emph{owners} of $\phi$ are the processes in $\rng(\prt P \circ \phi)$.
	If $\phi$ has exactly one owner then it is said to be \emph{local} to it.
	We denote the restriction of $\phi$ to the
	portion of bigraph held by a set of processes $S \subseteq \mathsf{Proc}$ 
	as $\embrestr[\prt P]{S}{\phi}$; 
	we shall drop the partition $\prt P$ when confusion seems unlikely.
\end{definition}
Given a process $Q$, every partial embedding 
$\psi \sqsubseteq \embrestr[\prt P]{\{Q\}}{\phi}$
is local to $Q$--except for the undefined embedding $\varnothing$ 
since the set $\rng(\prt P \circ \varnothing)$ will always be empty.
Therefore, the set of atoms below the restriction of $\phi$ to $Q$
\[
	At\left(\embrestr{\{Q\}}{\phi}\right) = \left\{
		\alpha \in At 
	\,\middle|\, 
		\alpha \sqsubseteq \embrestr{\{Q\}}{\phi}
	\right\}
\] 
can be thought as the \emph{support of $\phi$ local to $Q$}; any
change in the bigraph held by $Q$ that affects one of these atoms
will necessarily invalidate $\phi$. This last observation is at 
the hearth of the retraction phase of the embedding algorithm 
(cf.~Section~\ref{sec:demb}).

The notion of adjacency for bigraph components lifts
to the family of processes along the given partition map. Here
hyper-edges of the link graph are considered as trees
where the root is the hyper-edge handle (i.e.~an edge or an outer name)
and leaves are all the points (i.e.~ports or inner names) it connects.
\begin{definition}\label{def:proc-adjto}
Let $Q,R \in \msf{Proc}$. The process $Q$ is said to be \emph{adjacent (w.r.t.~the partition $\prt P$) to} $R$ whenever one of the following holds:
\begin{description}\itemsep=0pt
	\item[(ADJ-P)\label{def:adj-p}] there exists a node, port or site $c$ s.t.~$\prt P(c) = Q$ and $\prt P(\prnt_H(c)) = R$;
	\item[(ADJ-L)\label{def:adj-l}] there exists a point $p$ s.t.~$\prt P(p) = Q$ and $\prt P(\link_H(p)) = R$;
	\item[(ADJ-T)\label{def:adj-t}] there exist two roots or handles $t,t'$ s.t.~$\prt P(t) = Q$ and $\prt P(t') = R$;
\end{description}
A partial embedding $\phi$ is said to be adjacent to a process $R$
(w.r.t.~$\prt P$) iff its image is. \vspace{-1.3ex}
Adjacency of $Q$ or $\phi$ to $R$ w.r.t. $\prt P$ is denoted by
${Q \stackrel{\raisebox{-2ex}{$\scriptscriptstyle\, \prt P$}}{\adjto} R}$
and ${\phi \stackrel{\raisebox{-2ex}{$\scriptscriptstyle\, \prt P$}}{\adjto} R}$
respectively (with the option to from $\prt P$ when confusion no confusion may arise). 
\end{definition}

The adjacency relation defines a directed graph with vertices in $\msf{Proc}$
and hence a directed \emph{overlay network} $\msf N_\prt P$. 
This network bares a specific semantic meaning because it reflects adjacency 
of the bigraphical elements held by each process forming the network: 
two processes are adjacent if, and only if, they hold components that
are adjacent in the distributed bigraph $H$. The network $\msf N_\prt P$
is such that shortest paths connecting processes in it cannot exceed
in length shortest paths between the components of $H$ they hold.
\begin{lemma}
	Let $c_1, c_2 \in H$. The length of shortest path in $\msf N_\prt P$
	connecting $\prt P(c_1)$ and $\prt P(c_2)$ is limited from above ed by the length
	of the shortest path in $H$ connecting $c_1$ and $c_2$.
\end{lemma}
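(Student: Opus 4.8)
The plan is to argue that the adjacency relation on processes is, in a precise sense, the image under $\prt P$ of the adjacency relation on the components of $H$, so that any path in $H$ projects to a walk in $\msf N_\prt P$ of no greater length. First I would make explicit the underlying notion of ``path in $H$'': two components $c, c'$ of $H$ are adjacent exactly when one is the $\prnt_H$-parent of the other, or one is a point whose $\link_H$-handle is the other, or both are roots or handles (matching the three clauses \ref{def:adj-p}, \ref{def:adj-l}, \ref{def:adj-t} of Definition~\ref{def:proc-adjto}); a path between $c_1$ and $c_2$ is a sequence of components each adjacent to the next. This is the notion implicitly used on the bigraph side in the statement, and fixing it is the only genuine definitional choice in the proof.

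The key step is a single-edge lemma: if $c$ and $c'$ are adjacent components of $H$, then either $\prt P(c) = \prt P(c')$, or $\prt P(c) \adjto \prt P(c')$ in $\msf N_\prt P$. This is immediate by case analysis on which of \ref{def:adj-p}/\ref{def:adj-l}/\ref{def:adj-t} witnesses the adjacency of $c$ and $c'$: in each case the same witness $c$ (or the pair $c,c'$) witnesses the corresponding clause for the processes $\prt P(c)$ and $\prt P(c')$, unless those two processes coincide. The only mild subtlety is the port convention $\prt P((v,i)) \defeq \prt P(v)$, which must be invoked when the adjacency on the bigraph side runs through a port; but this is exactly the convention recorded in Definition~\ref{def:state-part}, so no real work is needed.

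Given the single-edge lemma, I would finish by induction on the length $k$ of a shortest path $c_1 = d_0, d_1, \dots, d_k = c_2$ in $H$. Applying the lemma to each consecutive pair $d_{j-1}, d_j$ yields a sequence $\prt P(d_0), \prt P(d_1), \dots, \prt P(d_k)$ in which consecutive entries are either equal or adjacent in $\msf N_\prt P$; deleting repetitions gives a path in $\msf N_\prt P$ from $\prt P(c_1)$ to $\prt P(c_2)$ of length at most $k$. Hence the shortest such path has length at most $k$, which is the claim. (If $\prt P(c_1) = \prt P(c_2)$ the bound holds trivially with the empty path.)

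I do not expect a serious obstacle here; the statement is essentially a routine ``projection of paths along a quotient map'' argument. The one place to be careful is making sure the three adjacency clauses are genuinely closed under the projection — in particular that clause \ref{def:adj-t}, which pairs roots with roots and handles with handles, is matched on the bigraph side by treating hyper-edges as trees rooted at their handle, as the paragraph preceding Definition~\ref{def:proc-adjto} prescribes. Once the notion of path in $H$ is pinned down consistently with that convention, the induction goes through with no further friction, and in fact the argument also shows $\msf N_\prt P$ is connected whenever $H$ is.
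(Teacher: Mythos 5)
Your proposal is correct and follows essentially the same route as the paper: the paper's own proof is a one-line sketch asserting that Definition~\ref{def:proc-adjto} characterizes the quotient induced by $\prt P$ on the adjacency structure of $H$, which is precisely your single-edge lemma, and the projection-of-paths induction you add is the routine completion of that remark.
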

\begin{proof}[Proof (sketch)]
	Definition~\ref{def:proc-adjto} characterizes the quotient
	induced by $\prt P$ on $H$.
\end{proof}
The last observation is crucial to our 
purposes since relates routing through the overlay $\msf N_\prt P$ with walks 
and visits of $H$ used e.g.~to compute embeddings into $H$ in non-distributed
settings. Notice that the restriction of $\msf N_\prt P$ to $\rng(\prt P)$
will always be connected i.e. for any two processes in $\rng(\prt P)$ 
there (at least) two paths starting from them and ending in the same node.
This ensures that there is always a ``rendezvous'' point for two messages
(and in particular two partial embeddings to be combined).
Connectedness is ensured by \ref{def:adj-t} but this condition is sufficient
and can be relaxed by assuming the adjacency relation to contain a 
directed-complete partial order (dCPO) on $\rng(\prt P)$.
Note that each process is aware to its neighbouring processes and the nature
of their adjacency because each process knows parents, children, etc. of each
component it hold.

\begin{remark}
	In \cite{mpm:gcm14w} we considered, for the sake of simplicity, an 
	undirected graph as overlay network. However, the additional information
	of a directed overlay network allows for more efficient routing strategies
	hence reducing duplicated computations of partial embeddings 
	(cf.~Section \ref{sec:demb}). 
	In fact, edge direction reflects the structure of the bigraph
	and can be leveraged also by partition strategies to
	distribute the bigraph privileging locality of reactions.
\end{remark}

\begin{example}[Multi-Agent Systems]
	\looseness=-1
	In \cite{mmp:dais14} we described how BRS can be used to both design
	and prototype multi-agent systems (MAS). In loc.~cit. BRS are used 
	to model the application domain lending helpful formal 	verification 
	tools (e.g.~model checkers) to the designer as long as simulation 
	ones. Then entities forming each bigraph are divided as \emph{subjects} and 
	\emph{objects} accordingly to their r\^ole in the model (e.g.~node controls);
	with the former	being the agents in the systems. When agents are identified
	with processes of a D-BAM this yield a prototype of the system where agent
	cooperation and reconfiguration correspond to negotiation of execution
	strategies and reactions respectively.
	
	In \cite{mmp:dais14} each entity designated as object 
	(e.g.~a node modelling a good) is assigned to the process of its first 
	ancestor designated as a subject (e.g.~a node modelling a store). 
	This is an instance of \emph{partition strategy}.
	In particular, the partition is driven by the application domain privileging
	locality of interactions: a store is going to be involved by each reaction affecting its goods.
\end{example}

\subsection{Distributed reactions}
\label{sec:dreac}
Let $\phi$ be an embedding of $G$ into the bigraph $H$ distributed across
the processes in the system and let $r: G \rightarrowtriangle G'$ 
be a parametric rewriting rule for the given BRS. 
Processes holding elements of $G$ image through $\phi$ 
or in its parameters have to negotiate the firing of $r$ 
and coordinate the update of their state. The negotiation
phase is related to the specific execution policy and hence
is left out from the present work (see \cite{mmp:dais14,perrone:thesis} for
an example). The update phase involves a distributed
transaction is handled by established algorithms
like \emph{two-phase-commit} \cite{cooper1982:distcommit}.

Each process concurrently enacts two roles: one active and one passive.
In the first case:
\begin{enumerate*}[label=\em(1\alph{*})]
	\item 
		it selects a reaction 
		(e.g.-rewriting rule, edit script) and 
		a suitable embedding among those provided by
		its embedding engine;
	\item
		starts a transaction with all the processes
		involved in the embedding (i.e.~$\rng(\prt P \circ \phi)$);
	\item
		waits for them to either approve or reject the reaction and 
		completes the transaction protocol accordingly.
\end{enumerate*}
In the second case:
\begin{enumerate*}[label=\em(2\alph{*})]
	\item 
		it waits for other processes to propose a reaction;
	\item
		votes for acceptance or rejection (execution strategy);
	\item
		executes the reaction iff each other participant 
		agrees on committing the transaction.
\end{enumerate*}
Note that consistency of the current bigraph is guaranteed by the correctness of the
distributed transaction protocol, even in presence of outdated
embeddings or concurrent transactions.

In \cite{mmp:dais14} reactions correspond to agent reconfigurations.
These may result in agent creation or termination requiring a 
\emph{life-cycle} for processes of the D-BAM too--since the latter are
identified with the former. Although we assumed a fixed
family of processes, to simplify the exposition, the D-BAM 
supports churns that are contextual to reactions,
especially when partitions are implicitly adapted by partition
strategies of the like of \cite{mmp:dais14}.

\section{Distributed embedding}
\label{sec:demb}
In this Section we introduce a decentralized algorithm for computing bigraphical embeddings
in the distributed settings outlined in Section~\ref{sec:dbrs}
and Figure~\ref{fig:dbrs-arch}.
Intuitively, each process running this algorithm maintains a 
private collection of partial embeddings for the guests it has to look
for and cooperates with its neighbouring processes to complete 
or refute them. 

For the sake of simplicity we assume that all processes are given 
the same set of guests (e.g.~the redexes of parametric rewriting rules defining the BRS
being executed by the D-BAM), that this set is fixed over the 
time and does not contain the empty bigraph. However, these mild
assumptions can be dropped with minor changes to the algorithm.
Likewise, we assume causally ordered
communication and refer the reader to \cite{mpm:gcm14w} for a version
of the algorithm where message causality and group communication are 
explicitly implemented on reliable point-to-point channels by means 
suitable logical clocks (i.e.~internal counters that every process 
attach to the information it generates).

\newcommand{\msg}[1]{\langle #1 \rangle}

\SetAlgoCaptionSeparator{.}
\NoCaptionOfAlgo
\DontPrintSemicolon
\newcommand\mycommfont[1]{\footnotesize\textcolor{blue}{#1}}
\SetCommentSty{mycommfont}
\SetKwFunction{img}{img}
\SetKwFunction{send}{send}
\SetKwFunction{sto}{to}
\SetKwFunction{self}{self}
\SetKwFunction{sug}{suggest}
\SetKwFunction{ret}{retract}
\SetKwFunction{update}{onUpdate}
\SetKwFunction{LAECalc}{getLocalAtoms}
\SetKwFunction{publish}{onSuggest}
\SetKwFunction{updateOverlay}{waitOverlayUpdate}
\SetKwFunction{PrEmb}{getCandidateEmbeddings}
\SetKwFunction{addEmbedding}{addEmbedding}
\SetKwFunction{removeEmbeddings}{removeEmbeddings}
\SetKwFunction{withhold}{onRetract}

\subsection{Computing and updating partial embedding}
Each process $Q$ in the D-BAM executes the \emph{embedding engine} module
alongside the reaction engine (cf.~Figure~\ref{fig:dbrs-arch}) with which
it asynchronously communicates by means of shared state structures.
On one side, the module observes the chunk of the current bigraph
held by the process and the updates the reaction module commits on it;
this defines the input of the reaction engine. (Note that overlay network 
$\mathsf N_\prt P$ are implicitly and consistently
updated during each distributed transaction wrapping a reaction.)
On the other side, the module provides a collection of \emph{available embeddings}
i.e.~a partial view of all the embeddings computed by the machine.
This defines the output of the module. 
Although processes often have an incomplete view, the algorithm guarantees
that each embedding is computed by at least one of them.

Reactions may invalidate embeddings which then have to be collected
by this module. 
Each embedding engine operates on its local collection of available embeddings 
by means of two procedures:	
$\addEmbedding(\phi)$ and $\removeEmbeddings(\Psi)$ where the
second removes all embeddings $\phi$ s.t.~$\psi \sqsubseteq \phi$ 
for some $\psi \in \Psi$. 
High consistency of available embeddings collections is not 
mandatory (reactions are consistent) allowing us
to trade some of it for performance and adopt an asynchronous
garbage collection scheme for sweeping invalidated embeddings.

An embedding may be owned by more than one process forcing their
execution engines to exchange information in order to 
compute/invalidate it. 
The data being exchanged consists of suggestions or retractions of
partial embeddings and is conveyed by two kind of messages: \sug and \ret. 
The former kind push newly discovered partial embeddings to other 
processes and the latter propagate invalidations.
For efficiency reasons, partial embeddings are sent in batches 
encoded as irreflexive undirected graphs (called \emph{atom graphs}) 
whose nodes are the atoms composing them 
(cf.~Proposition~\ref{prop:almost-atomic}) and whose edges
are checkable joins in the sense of Theorem~\ref{thm:small-checks}.
Atom graphs implicitly describe candidates but, by Theorem~\ref{thm:big-checks}
embeddings cannot be singled out without looking at more than
two atoms or their images; information that is available at suitable
stages of the algorithm only.

The same encoding is used by each process to store the set of
(candidate) partial embeddings forming its partial view of those
existing in the system. To simplify the exposition we assume this 
structure as indexed over the set of guests (hence duplicating information
relative to their overlaps).
We shall denote this structure by $\Gamma_{Q,G}$,
where $Q$ is the owning process and $G$ is the guest bigraph,
and drop the subscripts when clear from the context.
Each process $Q$ implicitly keeps track of which processes
it received an atom from; this set will be denoted as 
$\src_Q(\alpha)$.

\begin{algorithm2e}[t]
\SetAlgoRefName{onRetract}
\caption{}\label{algo:retract}
\SetKwProg{myproc}{Event handler}{}{}
\myproc{\withhold{$G$, $RA$, $RE$}}{
$(A,E) \gets \Gamma_{Q,G}$\;
$(A',E') \gets (A\setminus RA, E\setminus RE)$\;
\If{$A \not= A'\lor E \not= E'$}
{
$\Gamma_{Q,G} \gets (A', E')$\;
\removeEmbeddings{$RA \cup \{\alpha \sqcup \alpha' | \{\alpha,\alpha'\} \in RE\}$}\;
\send \ret$\msg{G, RA, RE}$ \sto $\{ P \mid {Q \adjto P}\}$
}
}
\end{algorithm2e}
\begin{algorithm2e}[t]
\SetAlgoRefName{onSuggest}
\caption{}\label{algo:suggest}
\SetKwProg{myproc}{Event handler}{}{}
\myproc{\publish{$G$, $A'$, $E'$}}{
$(A,E) \gets \Gamma_{Q,G}$\;
$A'' \gets A \cup A'$\;
$E'' \gets E \cup E' \cup$\{$\{\alpha,\alpha'\} \mid 
\alpha \sqcup \alpha' \in A\bigsqcup A' \text{ is \hyperref[def:locally-checkable]{locally checked} and \hyperref[def:ancestor-checkable]{ancestor checked}}$\}\;
\If{$A \not= A'' \lor E \not= E''$}
{
$\Gamma_{Q,G} \gets (A'' , E'')$\;
\For{$\phi \in \PrEmb{A'', E''}$}{
	\If{$\phi$ \textnormal{satisfies \ref{def:lge-5b} and \ref{def:pge-6b}}}{
		\addEmbedding{$\phi$}
	}
}
\send \sug$\msg{G, A',E'}$ \sto $\{ P \mid {Q \adjto P}\}$}
}
\end{algorithm2e}

Writes on $\Gamma_{Q,G}$ are triggered by $Q$ receiving
\ret or \sug messages. The two events are handled by
\hyperref[algo:retract]{\withhold} and 
\hyperref[algo:suggest]{\publish} respectively.
Retractions remove from $\Gamma_{Q,G}$
all invalidated atoms and edges--note that these
are collections, not an actual graph. If
any change is made the information if propagated to 
the neighbourhood of $Q$ and to the collection of
available embeddings resulting in the removal of 
embeddings incoherent with the current bigraph $H$.
Likewise suggestions add new atoms and locally
checked joins to $\Gamma_{Q,G}$ being these 
edges in the message payload $E'$ or computed
by $Q$ from its view of the bigraph (recall that
every process knows parents, children, etc. of
every component it holds). 
Whenever changes to $\Gamma_{Q,G}$ are made,
these are propagated to the process neighbourhood.
Contextually, candidate embeddings (i.e.~cliques in
$\Gamma_{Q,G}$ whose atoms cover $G$ with their domains)
are checked to single out any new embedding to be added to the
collection of available ones. 
All \hyperref[def:locally-checkable]{locally}
and \hyperref[def:ancestor-checkable]{ancestor} checkable
conditions are encoded as edges leaving
\ref{def:lge-5b} and \ref{def:pge-6b} 
to be checked right before executing \addEmbedding.
Ancestor checkable conditions require some extra care since the
transitive closure of the place graph is involved. In general,
processes have only a partial view of $\prnt_H^*$ but this
is sufficient under mild conditions on how atoms for roots,
sites and inner names of $G$ routed. In fact, if this kind of atoms are travel
along $\prnt_H$ then, the \emph{least ancestor} of their images 
(cf.~Lemma~\ref{lem:ancestor-checkable}) can check \ref{def:pge-5} 
and \ref{def:bge-1} by knowing the source of the message containing them 
(besides its atom graph and the one in the message).

\begin{algorithm2e}[t]
\SetAlgoRefName{onUpdate}
\caption{}\label{algo:update}
\SetKwProg{myproc}{Event handler}{}{}
\myproc{\update{}}{
\For{$G\in Guests$}{
	$(A,E) \gets \Gamma_{Q,G}$\;
	$(A',E') \gets$ \LAECalc{G}\;
	$RA \gets \{\alpha \in A\setminus A' \mid \rng(\prt P \circ \alpha) = \{Q\}\}$\;
	$RE \gets$ $\{\{\alpha_1,\alpha_2\} \in E \mid 	\alpha_1 \in A \cap A' \land (\alpha_2 \in A \cap A' \implies \{\alpha_1,\alpha_2\} \notin E')\}$\;
	\send \ret$\msg{G, RA, RE}$ \sto \self
	\;
	\updateOverlay{}\;
 	\send \sug$\msg{G, A'\setminus A, E' \setminus E}$ \sto \self
}
}
\end{algorithm2e}

The mechanism offered by \hyperref[algo:retract]{\withhold} and 
\hyperref[algo:suggest]{\publish} is also used by the event handler 
\hyperref[algo:update]{\update} to propagate the effect of reactions
involving $Q$ to $\Gamma_{Q,G}$ and the rest of the system.
The handler is triggered during the commit phase of any write to 
the partial view of the current bigraph owned by $Q$ and computes 
the ``effect'' of the write by looking for changes in the graph of 
atoms local to $Q$.
The new graph can be computed applying the algorithm described in 
\cite{mp:memo14} (with minor adaptations to restrict the solution 
to atomic partial embeddings only).
Then, the graph is compared to $\Gamma_{Q,G}$ 
(note that $\Gamma_{Q,G}$ may contain also atoms local to other processes)
to find atoms and edges that have to be added or removed. 
Changes are passed to \hyperref[algo:retract]{\withhold} and
\hyperref[algo:suggest]{\publish}. Note that
propagation of retracts to processes involved in the update
has to be completed before any change to the overlay network 
is applied (i.e.~between transaction commit approval and finalization)
since this allows retracts to be dispatched along the same
route of the atoms they are collecting. Concurrent reaction may still
prevent every invalidated atom to be collected by this mechanism,
however consistency of the machine state is still preserved by
reactions being wrapped by distributed transaction. 
Another
viable approach is offered by remote references and leasing times:
atoms whose leasing is not renewed are considered retracted and 
automatically removed from the system. However, more messages
would be exchanged in order to periodically renew leasing times.

\subsection{Enhancements and heuristics}
\label{sec:heu}

\paragraph{Routing}
To simplify the presentation of the algorithm suggestions and retractions
are sent indistinctly to the entire neighbourhood resulting in
part of them being discarded by receivers.
In particular, candidates that are not adjacent to a receiver
are always discarded since the receiving process cannot
contribute to or benefit from them in any way.

Therefore, atom graphs have to parted and dispatched only
to those process adjacent to the candidates they describe.
Formally, an atom graph is adjacent to a process whenever
it can be covered by cliques each containing an atom adjacent
to the process.
\begin{definition}
	\label{def:adj-routing}
	An atom graph $(A,E)$ is said to be \emph{adjacent} to a process $Q$
	if, and only if, there exists a family of cliques $\{(A_1,E_1),\dots,(A_k,E_k)\}$	such that:
	\begin{itemize}\itemsep=0pt
		\item 
			$(A,E) = \bigcup_{i} (A_i,E_i)$;
		\item
			there is $\alpha \in A_i$ s.t.~$\alpha \adjto Q$
			for each $1 \leq i \leq k$;
		\item
			for each $\alpha \in A$,
			if $\dom(\alpha) \in m_G \uplus n_G$ then
			$\alpha \adjto Q$.
	\end{itemize}
\end{definition}
Adjacency based routing is handled at the communication level,
like causal ordering of messages. which sends to each recipient
of a multicast send only the greatest sub-graph adjacent to it.
Henceforth, we assume messages to be parted and dispatched
following this routing protocol.

\paragraph{Isomorphisms}
The network footprint of the algorithm suffers from combinatorics
due to internal isomorphisms of guest and host bigraphs
(cf. Theorem \ref{thm:traffic}). Here we suggest an heuristic
aimed to mitigate the impact of this phenomenon.

Consider the relation on atomic partial embeddings defined,
for any two $\alpha,\beta \in At_{G,H}$, as:
\[
	\alpha \equiv \beta 
	\defiff 
	\alpha \cong \beta 
	\text{ and }
	\forall \gamma \in At_{G,H}\setminus\{\alpha,\beta\}
	\alpha \sqcup \gamma \iff \beta \sqcup \gamma
\]
where $\alpha\cong \beta$ whenever there are two
bigraph isomorphisms $\sigma_G$ and $\sigma_H$ s.t.~$
	\alpha \circ \sigma_G = \sigma_H \circ \beta
	$.
It is easy to check that this relation is an equivalence
and hence defines quotients for atom graphs i.e.~an effective
compression for messages and, in general, structures based
on atom graphs.
A lossless compression requires atoms bo be decorated with their
multiplicity (and any list of additional user provided
properties often found in some extensions of bigraphs).

\subsection{Adequacy}
Reactions change the current bigraph and can be though as resetting
the embedding engine with the latter then checking and updating its
state coherently.
Reworded, reactions are perturbations the embedding engine has to 
stabilize from and restoring the equilibrium produces traffic over 
the network. Traffic stops only when the equilibrium is reached 
i.e.~the machine stabilizes.
\begin{theorem}[Completeness]
	When the system is stable, every embedding can
	be found in the collection of available embeddings
	of some process.
\end{theorem}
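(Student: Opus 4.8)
The plan is to fix an arbitrary bigraph embedding $\phi : G \emb H$ of one of the guests $G$ into the current, stable host $H$ and to show that the algorithm is forced to re-derive it somewhere. First I would decompose $\phi$ by Proposition~\ref{prop:almost-atomic} into its finite base $At(\phi)=\{\alpha_1,\dots,\alpha_n\}$ with $\phi=\bigsqcup At(\phi)$, and sort the atoms into two kinds: the \emph{carrier} atoms, whose domain is a root, a site or an inner name of $G$ (these are the atoms routed along $\prnt_H$), and the rest (for nodes, edges and outer names of $G$), which reference components of $H$ and are therefore local, by Definition~\ref{def:local-emb}, to definite processes $P_i$; a carrier atom either maps its element to $\emptyset$, in which case it mentions no component of $H$ and is produced by every process, or maps it to host material and is likewise local to a process. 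The first step is to observe that in a stable configuration each $\alpha_i$ sits in the local store of the process that produces it: no process will send another message, so every $\Gamma_{Q,G}$ is a fixpoint of the \ref{algo:suggest} and \ref{algo:retract} handlers, and the local-atom routine \LAECalc (the adaptation of the matching algorithm of \cite{mp:memo14} to atomic partial embeddings, invoked by \ref{algo:update}) extracts from each process's chunk of $H$ every atom local to it --- in particular the relevant $\alpha_i$. Since each $\alpha_i$ is a restriction of the genuine embedding $\phi$ it is valid for the current $H$, hence is never collected by \ref{algo:retract}, which discards only atoms whose own local support has changed.

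Second I would show that the whole base $At(\phi)$ accumulates inside one process. The crucial routing fact is that a clique of atoms travels, under the adjacency-based routing of Definition~\ref{def:adj-routing}, wherever any one of its atoms is routed, and that carrier atoms are routed all the way up $\prnt_H$. So I would choose $Q^\star$ to be a process dominating, in $\msf N_\prt P$, the $\phi$-images of $G$'s roots and of the handles $\phi$ uses --- such a $Q^\star$ exists by connectedness of $\msf N_\prt P$ restricted to $\rng(\prt P)$ and the rendezvous property noted after Definition~\ref{def:proc-adjto}, and can be taken to hold the least common ancestor in $H$ of those images whenever $H$'s place forest provides one. Then the carrier atoms of $At(\phi)$ climb to $Q^\star$, dragging with them the clique $At(\phi)$ (the $\alpha_i$ being pairwise joinable since all are below $\phi$, the subgraph they span in every transmitted atom graph is adjacent, in the sense of Definition~\ref{def:adj-routing}, to each intermediate process because it contains a carrier atom routed through it). Hence in the stable configuration $\{\alpha_1,\dots,\alpha_n\}\subseteq\Gamma_{Q^\star,G}$.

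Third I would argue that $Q^\star$ publishes $\phi$. For every $i,j$ the join $\alpha_i\sqcup\alpha_j$ is defined and is a restriction of $\phi$, hence a genuine partial embedding; it is therefore \emph{locally checked} (Theorem~\ref{thm:small-checks}), and the ancestor-checkable conditions \ref{def:pge-5} and \ref{def:bge-1} --- the only ones that could be missing, and only when a carrier atom is involved --- hold and are discharged at $Q^\star$, which by the $\prnt_H$-routing of carriers and Lemma~\ref{lem:ancestor-checkable} has exactly the view of $\prnt_H^\ast$ and of the message sources needed to verify them; so the join is \emph{ancestor checked} too. Thus \ref{algo:suggest} inserts every edge $\{\alpha_i,\alpha_j\}$ into $\Gamma_{Q^\star,G}$, making $At(\phi)$ a clique whose atom domains cover $G$ (as $\phi$ is total). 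When the last of these atoms arrives, \ref{algo:suggest} runs \PrEmb on the updated graph and returns, among its output, the candidate $\bigsqcup At(\phi)=\phi$; being a genuine embedding, $\phi$ satisfies the two outstanding conditions \ref{def:lge-5b} and \ref{def:pge-6b}, which $Q^\star$ can now test from the atoms for $G$'s edges and nodes together with its (and the received) local view of $\link_H$ and $\prnt_H$ around the handles in play. So $Q^\star$ calls $\addEmbedding(\phi)$; and since in a stable configuration there are neither reactions nor messages in transit, while \removeEmbeddings is invoked only from \ref{algo:retract} on receipt of a \ret message, $\phi$ is never withdrawn and remains in $Q^\star$'s collection of available embeddings.

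The hard part will be the second and third steps --- in essence, proving that a \emph{single} process both accumulates every atom of $\phi$ and commands enough of $\prnt_H^\ast$ and of the surrounding link/place neighbourhoods to certify the globally quantified conditions \ref{def:lge-5b}, \ref{def:pge-6b}, \ref{def:pge-5} and \ref{def:bge-1} that, by Theorem~\ref{thm:big-checks}, resist a two-atom check. This is exactly what the ``route carrier atoms up $\prnt_H$, verify at the least common ancestor'' discipline and the adjacency-routing of Definition~\ref{def:adj-routing} are engineered to achieve, so the real work is to show those design decisions do what they promise; the first and last steps are routine bookkeeping on the fixpoint reached by the \ref{algo:update}, \ref{algo:suggest} and \ref{algo:retract} handlers.
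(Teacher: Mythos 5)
Your proposal is correct and follows essentially the same route as the paper: decompose $\phi$ into its atomic base (Proposition~\ref{prop:almost-atomic}), observe that each atom is computed by the process holding its image, use connectedness of $\msf N_\prt P$ and the adjacency routing of Definition~\ref{def:adj-routing} to funnel the atoms to a rendezvous process, discharge the locally and ancestor checkable conditions en route (Theorem~\ref{thm:small-checks}, Lemma~\ref{lem:ancestor-checkable}, Corollary~\ref{cor:ancestor-checkable}), and verify \ref{def:lge-5b} and \ref{def:pge-6b} at assembly time. The only real difference is presentational: the paper organizes the aggregation step as an induction on the number of owners $|\rng(\prt P \circ \phi)|$, whereas you argue directly about a single dominating process $Q^\star$; the key ingredients and the acknowledged ``hard part'' are the same.
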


By causally ordered communication we can assume, w.l.o.g.,
that the system stabilized before the last reaction. Then
completeness is equivalent to the fact that for each $\phi : G \emb H$
there is some $Q$ s.t.~$\phi \in (\Gamma_{Q,G})^\star$
where $(\Gamma_{Q,G})^\star$ is the set of partial embeddings
whose atoms are in $\Gamma_{Q,G}$
\begin{lemma}
	If $\phi$ is a partial embedding for $G$ then 
	there is a process $Q$ s.t.~$\phi \in (\Gamma_{Q,G})^\star$.
\end{lemma}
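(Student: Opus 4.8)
The plan is to argue by well-founded induction on the base $At(\phi)$, using the fact (Proposition~\ref{prop:almost-atomic}) that $\phi$ is the join of finitely many atoms, together with the connectedness of the overlay $\msf N_\prt P$ restricted to $\rng(\prt P)$ (noted after the Lemma following Definition~\ref{def:proc-adjto}). The base case is $\phi = \varnothing$, for which any process trivially works, and the single-atom case, which holds because each atom $\alpha$ is generated locally by its owner: the handler \hyperref[algo:update]{\update} computes, via \LAECalc, exactly the atom graph of the partial embeddings local to $Q$, so after stabilization every atom $\alpha$ with $\rng(\prt P\circ\alpha)=\{Q\}$ lies in $\Gamma_{Q,G}$, and atoms whose image spans several processes decompose into single-element candidates each generated by the respective owner and then propagated by \hyperref[algo:suggest]{\publish}.

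For the inductive step, write $At(\phi)=\{\alpha_1,\dots,\alpha_n\}$ with $n\geq 2$ and split it into two nonempty subfamilies whose joins $\phi_1,\phi_2$ are again partial embeddings with $\phi=\phi_1\sqcup\phi_2$ (possible because joins are computed pointwise and any subset of a consistent family is consistent). By the induction hypothesis there are processes $Q_1,Q_2$ with $\phi_i\in(\Gamma_{Q_i,G})^\star$. Now I would invoke the "rendezvous" property: since $\phi$ is a genuine partial embedding its image is adjacent (in the sense of Definition~\ref{def:proc-adjto}) and the restricted overlay is connected, so there is a process $R$ reachable from both $Q_1$ and $Q_2$ along directed adjacency paths. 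Because \hyperref[algo:suggest]{\publish} forwards every newly added atom (and every locally checkable join it can compute) to the entire adjacent neighbourhood, and messages are causally ordered and eventually delivered, after stabilization $R$ has received the atom graphs of both $\phi_1$ and $\phi_2$; hence $At(\phi_1)\cup At(\phi_2)=At(\phi)\subseteq\Gamma_{R,G}$. It remains to check that the join edges internal to $\phi$ but across the $\phi_1/\phi_2$ split are present in $\Gamma_{R,G}$: these are exactly the edges $\{\alpha,\alpha'\}$ with $\alpha\in At(\phi_1)$, $\alpha'\in At(\phi_2)$, and since $\phi$ is a partial embedding each such join is locally checked and ancestor checked (Theorem~\ref{thm:small-checks}), so \hyperref[algo:suggest]{\publish} adds it to $E''$ when it merges the two incoming atom graphs. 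Therefore $\phi\in(\Gamma_{R,G})^\star$.

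The main obstacle is the treatment of the \emph{ancestor checkable} conditions \ref{def:pge-5} and \ref{def:bge-1} at the rendezvous process $R$: unlike the locally checkable edges, these involve $\prnt_H^*$, and $R$ in general has only a partial view of it. The argument must appeal to the routing discipline discussed after Algorithm~\hyperref[algo:update]{\update} (and to be formalized in Lemma~\ref{lem:ancestor-checkable}), namely that atoms for roots, sites and inner names of $G$ travel along $\prnt_H$, so that the \emph{least ancestor} of their images can certify these conditions from the message source together with the two atom graphs it holds. Consequently $R$ should be taken not as an arbitrary common successor but as (an ancestor of) the least-ancestor process for the site/root/inner-name atoms occurring in $At(\phi)$; verifying that such an $R$ still lies on directed paths from both $Q_1$ and $Q_2$, and that it accumulates enough of $\prnt_H^*$ to discharge \ref{def:pge-5} and \ref{def:bge-1}, is the delicate part of the proof and is where the directedness of the overlay (Remark following Definition~\ref{def:proc-adjto}) is essential.
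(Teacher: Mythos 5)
Your proposal is correct and follows essentially the same route as the paper's own proof: decompose $\phi$, apply the induction hypothesis, use connectedness of the overlay to obtain a rendezvous process, let \hyperref[algo:suggest]{\publish} accumulate the atoms and the checkable join edges along the way, and discharge the ancestor-checkable conditions \ref{def:pge-5} and \ref{def:bge-1} via the routing discipline (Definition~\ref{def:adj-routing}, Lemma~\ref{lem:ancestor-checkable}, Corollary~\ref{cor:ancestor-checkable}). The only difference is inessential: the paper inducts on the number of owning processes $k=|\rng(\prt P\circ\phi)|$ and splits $\phi$ into its restrictions to the individual owners, whereas you induct on the size of the base $At(\phi)$ with binary splits; both bottom out in the same base case of atoms generated locally by their owners.
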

\begin{proof}
	The proof is given by induction on the size of
	$\{Q_1,\dots,Q_k\} = \rng(\prt P\circ \phi)$.
	If $k = 1$ then the embedding is local to $Q_1$ and hence
	$\phi \in (\Gamma_{Q,G})^\star$.
	Otherwise, let $\phi_i = \embrestr{\phi}{\{Q_i\}}$ for
	$1 \leq i \leq k$. By inductive hypothesis each
	$\phi_i \in (\Gamma_{Q_i,G})^\star$.
	By connectedness hypothesis there is at least one process 
	$Q$ reachable by each $Q_i$.
	Messages are routed to all, and only, 
	the processes that can benefit from or contribute to them,
	in particular to $Q$. All edges that are locally checked
	and ancestor checked are added while messages travel the network.
	We only have to prove that there is always a process that can 
	add each edge along the paths to $Q$. By Theorem~\ref{thm:small-checks}, the only cases left are ancestor checkable. We conclude by
	Definition~\ref{def:adj-routing} and by
	Corollary~\ref{cor:ancestor-checkable}.
\end{proof}

\begin{lemma}
	\label{lem:ancestor-checkable}
	Let $r \in m_G$, $s \in n_G$, 
	$\alpha : r \mapsto u$, $\alpha' : s \mapsto \{u'\}$ two
	atoms, and $v,v'$ be the roots above $u$ and $u'$ respectively. 
	If $Q$ is the process to receive/compute $\alpha$ and $\alpha'$ earlier then
	at least one of the following is true:
	\begin{enumerate}[label=(\alph*)]\itemsep=0pt
		\item
			$Q$	holds the least ancestor of $u$ and $u'$;
		\item
			$Q$ holds both $v$ and $v'$;
		\item
			$Q$ holds either $v$ or $v'$ and the
			process holding the other sent the embedding.
	\end{enumerate}
	Let $i \in X_G$, $(u'',p) \in V_H$, $1 \leq p \leq \ar\circ\ctrl_H(u'')$,
	and $\alpha'' : i \mapsto \{(u'',p)\}$. There is a process $Q$ 
	that holds $u'$ and an ancestor of $u''$.
\end{lemma}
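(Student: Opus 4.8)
The plan is to lean on the routing discipline announced just before the statement: under the assumed ``mild condition'', every atom whose domain is a guest region, a guest site, or a guest inner name is forwarded \emph{upward} along $\prnt_H$. Concretely, an atom $\beta$ whose image sits at (or at a port of) the host component $x$ travels from $\prt P(x)$ to $\prt P(\prnt_H(x))$, then to $\prt P(\prnt_H^2(x))$, and so on up to the host root above $x$; once it reaches a root holder it is, in addition, shared with every other root holder, which is legitimate since all host roots are mutually adjacent by \ref{def:adj-t} (the same clique that provides a rendezvous point). Two consequences, which I would isolate as a preliminary claim, drive the argument: (i) the processes that ever hold $\beta$ are exactly the holders of the ancestors of $x$, together with all root holders in case $\beta$ reaches a root; and (ii) since atoms accumulate monotonically at each process during propagation (no retraction being in flight), the holder of an ancestor $y$ of $x$ acquires $\beta$ no later than the holder of any \emph{proper} ancestor of $y$, as $\beta$ visits $\prt P(y)$ strictly before any such higher holder.

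For the first part I would split on whether $v=v'$. If $v=v'$, then $u$ and $u'$ lie in the same $\prnt_H$-tree and have a least common ancestor $w$ --- their \emph{least ancestor} --- which lies on the upward route of both $\alpha$ and $\alpha'$. A process strictly below $w$ on the branch through $u$ is never reached by $\alpha'$ (the two branches below $w$ are disjoint and such a process is not a root holder), and symmetrically on the branch through $u'$; so by (i) the only processes that can hold both atoms are $\prt P(w)$, the holders of the ancestors of $w$, and the root holders. By (ii), $\prt P(w)$ acquires the later-arriving of $\alpha,\alpha'$ no later than the holder of any proper ancestor of $w$, and strictly before any root holder other than $\prt P(v)$, since those get the second atom only after it has traversed $\prt P(w)$ on the way to $\prt P(v)$. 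Hence the first process to hold both is $\prt P(w)$, so $Q=\prt P(w)$ and alternative $(a)$ holds (and also $(b)$ when $w=v$). The degenerate cases $w=u$ and $w=u'$ are covered because ``receive/compute'' accounts for one of the atoms being generated locally at $\prt P(w)$.

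If instead $v\neq v'$, then $u,u'$ have no common ancestor, so $\alpha$ climbs to $\prt P(v)$ and is then shared across the root clique (reaching $\prt P(v')$), and symmetrically $\alpha'$ reaches $\prt P(v)$; by (i) these are the only new holders, so the first process to hold both is $\prt P(v)$ or $\prt P(v')$. If they coincide, $Q$ holds both $v$ and $v'$ --- alternative $(b)$; otherwise, say $Q=\prt P(v')$ (the other case is symmetric), then $Q$ holds $v'$ and got $\alpha$ from $\prt P(v)$, which holds $v$ --- alternative $(c)$. For the second part, take the site atom $\alpha':s\mapsto\{u'\}$ of the first part and suppose it witnesses that the port $(u'',p)$ lies in the parameter, so $u'=\prnt_H^k(u'')$ for some $k$, i.e.\ $u'$ is a site image that is an ancestor of $u''$. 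The inner-name atom $\alpha''$ is routed upward from $\prt P(u'')$ through $\prt P(\prnt_H(u'')),\prt P(\prnt_H^2(u'')),\dots$, hence through $\prt P(u')$, while $\alpha'$ is generated at $\prt P(u')$; so $Q=\prt P(u')$ holds $u'$ --- an ancestor of $u''$ --- and also holds $\alpha''$, and knowing via $\src$ that $\alpha''$ arrived from the child of $u'$ towards $u''$ it can discharge the \ref{def:bge-1} test for $(u'',p)$.

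The step I expect to be the real obstacle is the timing argument in the case $v=v'$: one must exclude adversarial message interleavings in which a proper ancestor of $w$, or a distant root holder, accumulates both atoms before $\prt P(w)$ does, which is precisely where consequence (ii) and the persistence of the atom graphs are needed. A second, more structural, difficulty is reconciling the upward routing discipline with the adjacency-based routing of Definition~\ref{def:adj-routing}: one has to verify that an atom graph carrying one of these special atoms is adjacent, in the sense of that definition, to the next holder up the $\prnt_H$-chain and to each root holder once a root is reached, so that it is actually forwarded rather than pruned. Once this routing invariant is in place, checking that $\prt P(w)$, resp.\ $\prt P(u')$, can evaluate \ref{def:pge-5} and \ref{def:bge-1} from the atoms it holds together with the recorded $\src$ is routine.
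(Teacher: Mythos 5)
Your proposal is correct and rests on exactly the same idea as the paper's own proof, which is only a two-sentence sketch: atoms for guest sites and roots are dispatched following $\prnt_H$ only, and atoms for host ports follow both $\prnt_H$ and $\link_H$. Your case analysis ($v=v'$ versus $v\neq v'$, the timing claim, and the root-clique rendezvous) is a faithful and considerably more detailed elaboration of that routing discipline, so no further comparison is needed.
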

\begin{proof}[Proof (Sketch)]
	Atoms for guest sites and roots are dispatched following $\prnt_H$ only.
	Atoms for host ports are dispatched following both $\prnt_H$ and $\link_H$.
\end{proof}

\begin{corollary}[Ancestor checks]
	\label{cor:ancestor-checkable}
	For any two ancestor checkable atoms involving host ports, guest roots or sites there is a process that computes their edge before the system
	stabilize.
\end{corollary}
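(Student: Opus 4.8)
The plan is to derive Corollary~\ref{cor:ancestor-checkable} directly from Lemma~\ref{lem:ancestor-checkable} by a case analysis on the two ancestor-checkable conditions, which by Definition~\ref{def:ancestor-checkable} and Theorem~\ref{thm:big-checks} are exactly \ref{def:pge-5} and \ref{def:bge-1}. First I would observe that, by Theorem~\ref{thm:small-checks}, deciding whether the edge $\{\alpha,\alpha'\}$ belongs in an atom graph amounts to checking the locally checkable conditions on $\alpha\sqcup\alpha'$—which any process adjacent to both images already performs—together with the pertinent ancestor-checkable condition; and for the atoms at hand the latter reduces to a single membership question in the transitive closure $\prnt_H^*$. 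Concretely: for \ref{def:pge-5}, with $\alpha$ a guest-root atom $r\mapsto u$ and $\alpha'$ a guest-site atom $s\mapsto\{u'\}$, the join-edge is admissible iff $u'\notin\prnt_H^*(u)$; for \ref{def:bge-1}, whose only non-trivial instance has an inner-name atom $\alpha'':i\mapsto\{(u'',p)\}$ pointing at a host port (the instance $i\mapsto\{x\}$ with $x\in X_H$ discharges \ref{def:bge-1} outright) and a guest-site atom $\alpha':s\mapsto\{u'\}$, the join-edge is admissible iff $u'\in\prnt_H^*(u'')$.

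For \ref{def:pge-5} I would let $v,v'$ be the roots of $H$ lying above $u$ and $u'$ and take $Q$ to be the first process to receive or compute both $\alpha$ and $\alpha'$; Lemma~\ref{lem:ancestor-checkable} then offers three sub-cases. If $Q$ holds the least common ancestor $w$ of $u$ and $u'$ then, since atoms for guest roots and sites are dispatched one $\prnt_H$-hop at a time up towards the root, $Q$ has observed the whole $\prnt_H$-walks from $u$ and from $u'$ to $w$, which is precisely the fragment of $\prnt_H^*$ needed to decide whether $u'\in\prnt_H^*(u)$. If instead $Q$ holds both $v$ and $v'$—or one of them, with the other's holder forwarding the atom—then either $v\neq v'$, in which case $u$ and $u'$ lie in different trees of the place forest, so $u'\notin\prnt_H^*(u)$ and $Q$ certifies \ref{def:pge-5} from the root identities alone, or $v=v'$, whence the least common ancestor is dominated by this shared root and we are back to the first sub-case.

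For \ref{def:bge-1} I would invoke the second part of Lemma~\ref{lem:ancestor-checkable}, which yields a process $Q$ holding both $u'$ and an ancestor of $u''$. Inner-name and host-port atoms climb $\prnt_H$ as well—they additionally follow $\link_H$, but that link discipline is immaterial to this check—so $Q$, using the accumulated atom graph together with the source of the \sug message that carried $\alpha''$, knows the portion of $\prnt_H^*$ running from $u''$ up through the ancestor it holds, and in particular can decide whether that walk passes through $u'$, i.e.~whether $u'\in\prnt_H^*(u'')$. In both conditions, then, the designated process possesses exactly the ancestry information required to settle admissibility of the edge.

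It remains to connect this to the algorithm and to the word \emph{before}: whenever such a pair of atoms reaches $Q$ inside a \sug message, the handler \hyperref[algo:suggest]{\publish} recomputes the edge set of $\Gamma_{Q,G}$ by adding precisely the locally and ancestor checked joins, so the edge is created right there; and by adjacency-based routing (Definition~\ref{def:adj-routing}, whose third clause forces atoms for guest roots and sites to be routed along the adjacency induced by $\prnt_H$) both atoms do in fact arrive at $Q$. Since this occurs during the propagation of the \sug messages—while the network is still carrying traffic—the edge is computed strictly before the system stabilizes. I expect the crux of the argument to be the claim that no process ever needs a parent link outside its own holdings: one has to show that \emph{least common ancestor plus the traversed upward walks} (respectively \emph{an ancestor of the host port plus the message source}) genuinely exposes enough of $\prnt_H^*$ to settle these predicates, which rests on the routing invariants set up in the proof of Lemma~\ref{lem:ancestor-checkable} and on the strictly one-hop-along-$\prnt_H$ nature of the dispatch of root-, site-, and port-atoms.
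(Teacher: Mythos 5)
Your proposal is correct and follows essentially the same route as the paper's (much terser) proof sketch: take the process $Q$ supplied by Lemma~\ref{lem:ancestor-checkable}, use the fact that atoms for guest roots, sites and host ports are routed strictly along $\prnt_H$ together with the source/adjacency witness of the \sug message and $Q$'s own piece of the bigraph to decide the relevant $\prnt_H^*$ membership for \ref{def:pge-5} and \ref{def:bge-1}, and let \hyperref[algo:suggest]{\publish} record the edge before stabilization. The only cosmetic difference is your phrasing that $Q$ ``has observed the whole upward walks,'' where the paper relies just on the last hop (the adjacency witness) plus the one-hop-along-$\prnt_H$ dispatch invariant; these amount to the same thing.
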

\begin{proof}[Proof (Sketch)]
	The process receiving/computing the atoms for guest sites and roots
	earlier checks them by looking at his piece of the shared bigraph
	and at the adjacency witness used to dispatch the message (i.e.~which child or sibling root was used by the sender process to route the message).
	Likewise, a process holding the image of a site checks whether a received
	inner name sits below it.
\end{proof}

\begin{theorem}[Soundness]
	If the system stabilizes then each embedding
	in the collection of available ones is valid w.r.t.~the
	current bigraph.
\end{theorem}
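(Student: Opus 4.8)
The plan is to track the single way in which the collection of available embeddings held by a process can grow --- the call to {\addEmbedding} inside \hyperref[algo:suggest]{\publish} --- and to show both that every embedding it installs is a genuine bigraph embedding of the corresponding guest $G$, and that once the machine is stable the host it embeds into is the current bigraph $H$.

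I would first establish the invariant that every atom ever stored in some $\Gamma_{Q,G}$ is a genuine atomic partial embedding and that every edge of $\Gamma_{Q,G}$ joins two atoms whose join is both \hyperref[def:locally-checkable]{locally checked} and \hyperref[def:ancestor-checkable]{ancestor checked}: atoms computed by {\LAECalc} are such by the specification of the local matcher of \cite{mp:memo14}, atoms arriving in {\sug} messages by induction along the propagation chain, all insertions happen in \hyperref[algo:suggest]{\publish} --- which adds exactly edges of this form --- and all deletions happen in \hyperref[algo:retract]{\withhold}. Granted this, let $\phi$ be any candidate enumerated by {\PrEmb} and passed to {\addEmbedding}, with atoms $\alpha_1,\dots,\alpha_n$: by construction these cover $G$ with their domains and form a clique of $\Gamma_{Q,G}$. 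Each $\alpha_i$, being a partial embedding, satisfies \ref{def:lge-3}, \ref{def:lge-6}, \ref{def:pge-3} and \ref{def:pge-7} on its singleton domain, and every pair $\alpha_i,\alpha_j$ has a locally checked join; hence Theorem~\ref{thm:small-checks} yields that $\phi$ satisfies every \hyperref[def:locally-checkable]{locally checkable} condition. The remaining clauses of Definition~\ref{def:bge} are \ref{def:lge-5b} and \ref{def:pge-6b}, tested explicitly just before {\addEmbedding}, and the \hyperref[def:ancestor-checkable]{ancestor checkable} \ref{def:pge-5} and \ref{def:bge-1}; the latter are, by construction of $\Gamma_{Q,G}$, folded into its edges, and since $\phi$ is a clique it passes each such check --- provided every check a clique over the full guest requires has indeed been carried out by the time the system stabilises, which is precisely Corollary~\ref{cor:ancestor-checkable} (the least ancestor of the pair being checked, identified in Lemma~\ref{lem:ancestor-checkable}, performs it). Thus $\phi$ satisfies every clause of Definitions~\ref{def:lge}, \ref{def:pge} and \ref{def:bge}, i.e.\ it is a bigraph embedding of $G$.

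It remains to see that the host witnessed above is the \emph{current} bigraph. Whenever a reaction alters the chunk of $H$ held by a process, \hyperref[algo:update]{\update} recomputes the local atom graph and feeds the difference through \hyperref[algo:retract]{\withhold} and \hyperref[algo:suggest]{\publish}; retractions moreover travel before the overlay is reconfigured and along the very routes that carried the retracted atoms, and {\removeEmbeddings} discards from every collection each embedding resting on a removed atom or edge. Arguing as for completeness, causally ordered communication lets us assume the last reaction is fully reflected before the system stabilises, so in a stable configuration every atom and edge still present in any $\Gamma_{Q,G}$ --- hence every surviving available embedding --- is valid for the current bigraph. Combining the two parts yields the claim.

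The delicate point, as in the completeness proof, is the ancestor checkable \ref{def:pge-5} and \ref{def:bge-1}: they involve the transitive closure $\prnt_H^*$ and, by Theorem~\ref{thm:big-checks}, cannot be decided from the immediate neighbourhoods of two atom images, so the argument rests on the routing discipline --- atoms for guest roots and sites flowing along $\prnt_H$, atoms for host ports along both $\prnt_H$ and $\link_H$ --- which guarantees that for each pair to be checked some single process ultimately holds both atoms together with a long enough initial segment of their ancestor chain, and does so before quiescence. Establishing this is exactly the content of Lemma~\ref{lem:ancestor-checkable} and Corollary~\ref{cor:ancestor-checkable}; the soundness argument then only has to invoke them for each ancestor-checkable pair occurring in an added candidate.
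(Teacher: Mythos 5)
Your proposal is correct and follows essentially the same route as the paper's (much terser) proof: embeddings enter the available collection only through \texttt{addEmbedding} in \hyperref[algo:suggest]{\publish}, whose edge discipline plus the final \ref{def:lge-5b}/\ref{def:pge-6b} test guarantees every added candidate is a genuine embedding, while \hyperref[algo:retract]{\withhold} together with connectedness and stabilization guarantees invalidated ones are swept out. The only (harmless) overshoot is your appeal to Lemma~\ref{lem:ancestor-checkable} and Corollary~\ref{cor:ancestor-checkable}, which are really completeness ingredients (they ensure checkable edges \emph{get} added); for soundness it suffices that $\Gamma_{Q,G}$ only ever contains edges whose joins have already passed the local and ancestor checks.
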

\begin{proof}
	Effects of reactions are computed locally to each 
	embedding engine and then propagated through the network.
	Propagation stops as soon as it stops producing changes
	in each $\Gamma$. By network connectedness and stabilization
	of the machine each invalid	embedding is eventually computed
	and removed by \hyperref[algo:retract]{\withhold}. Embeddings are added only by
	\hyperref[algo:suggest]{\publish} which filters out candidate
	and partial embeddings.
\end{proof}

\subsection{Complexity} 

The arity of the set of all embeddings of $G$ into $H$ is 
in $\mathbf O(|G|^{|H|})$ since, in the worst case, guest and host
encode two finite sets with a root for each element. On the other hand,
by Proposition~\ref{prop:almost-atomic}, the same set is described by
families in $At_{G,H}$ or, following the representation used by the algorithm,
by a suitable graph on $At_{G,H}$. Because elements of $At_{G,H}$
are essentially pairs from $G \times H$ the spatial
complexity of the graph representation is in $\mathbf O(|G|^2\cdot|H|^2)$
without any particular encoding. The same bound holds for the size of
each message sent on the overlay network. 
However, a process sends over the network
only nodes and edges it adds or removes from his $\Gamma_G$ and messages
are dispatched on the base of their semantic adjacency. Therefore,
between two reactions, every edge travels a link at most once
(either inside a suggest or retract message). 

\begin{lemma}
	The number of links in  $\mathsf N_\prt P$ is in $\mathbf O(|H|)$.
\end{lemma}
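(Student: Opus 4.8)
The plan is to bound, clause by clause, the number of links that Definition~\ref{def:proc-adjto} puts into $\mathsf N_\prt P$, and then sum the three contributions. A preliminary observation makes the bookkeeping uniform: every adjacency asserted by \ref{def:adj-p}, \ref{def:adj-l} or \ref{def:adj-t} holds between two processes that each hold some component of $H$, i.e.\ between two elements of $\rng(\prt P)$; so processes outside $\rng(\prt P)$ are isolated in $\mathsf N_\prt P$ and contribute nothing, and since $\prt P$ is a map out of the finite set of components of $H$ we have $|\rng(\prt P)| \le |H|$. Hence it suffices to count links inside $\rng(\prt P)$.

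For \ref{def:adj-p}, each node and each site $c$ on which $\prnt_H$ is defined contributes at most the single link $(\prt P(c),\prt P(\prnt_H(c)))$ (ports contribute nothing, being held by the same process as their node, by the convention $\prt P((v,i)) \defeq \prt P(v)$); thus the set of \ref{def:adj-p} links is the image of the assignment $c \mapsto (\prt P(c),\prt P(\prnt_H(c)))$ and has at most $|V_H| + m_H \in \mathbf{O}(|H|)$ elements. For \ref{def:adj-l}, likewise, the links form the image of $p \mapsto (\prt P(p),\prt P(\link_H(p)))$ over the points $p \in P_H \uplus X_H$, so there are at most $|P_H| + |X_H| \in \mathbf{O}(|H|)$ of them. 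The crux in both cases is that $\prnt_H$ and $\link_H$ are (total) functions, so each component of $H$ witnesses at most one link.

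The delicate clause is \ref{def:adj-t}: read literally it makes \emph{every} pair of processes holding a root or a handle adjacent, i.e.\ a complete digraph on a set of up to $m_H + |E_H| + |Y_H| \in \mathbf{O}(|H|)$ processes, which is quadratically many links and would only give $\mathbf{O}(|H|^2)$. Here is exactly where one invokes the relaxation noted after Definition~\ref{def:proc-adjto}: for the overlay it is enough that the adjacency relation \emph{contain} a dCPO on $\rng(\prt P)$, and such a dCPO can be realized by a structure with only linearly many links — for instance a spanning chain (or a rooted tree) on $\rng(\prt P)$, which is a directed order admitting a common reachable element and has $|\rng(\prt P)| - 1 \in \mathbf{O}(|H|)$ links. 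Summing the three contributions yields $\mathbf{O}(|H|)$ links in $\mathsf N_\prt P$. I expect the main obstacle to be precisely this last step: one must be explicit that the overlay is built from the relaxed connectivity requirement rather than from the sufficient-but-wasteful \ref{def:adj-t}, since only then does the quadratic term collapse to linear; the rest is the routine image-counting of the previous paragraph.
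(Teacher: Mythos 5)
Your proof is correct and takes essentially the same route as the paper's: both arguments bound the \ref{def:adj-p} and \ref{def:adj-l} contributions linearly by exploiting the functionality of $\prnt_H$ and $\link_H$ (the paper phrases this as a constant out-degree bound in the finest partition $\prt I$, you phrase it as image-counting), and both collapse the \ref{def:adj-t} clique of roots and handles to a tree using the relaxed connectivity requirement. Your explicit flagging of where the quadratic blowup would otherwise occur matches the paper's ``reorganized into a tree'' step, so no changes are needed.
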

\begin{proof}
	The number $L$ of links in $\mathsf N_\prt P$ is bounded
	by the size of $H$ since the $\mathsf N_\prt P$ is a quotient
	of $H$. Hence, the worst case network is $\mathsf N_\prt I$
	where $\prt I : H \mono \mathsf{Proc}$ is the finest possible
	partition (i.e~each component is assigned a distinct process).
	Except for the clique induced by roots and handles,
	$\mathsf N_\prt I$ is a directed acyclic graph where
	each vertex has at most 
	$1 + \max_{v \in V_H }(\ar \circ \ctrl_H)(v)$ 
	outgoing edges and therefore is bounded by the maximal arity $k$
	occurring the given signature $\Sigma$ which is a fixed parameter
	of the D-BAM, hence a constant.
	The remaining case is given by the clique of roots and handles;
	their outgoing degree may exceed $k$ but their topology 
	can be easily reorganized to into a tree that satisfies the 
	bound and the above reasoning.
	Therefore, $L$ is bounded by the number of components of $H$.
\end{proof}

The algorithm generates, in the \emph{worst case scenario}, as much
traffic as a centralized one in its \emph{best case scenario}.
\begin{theorem}
	\label{thm:traffic}
	The traffic generated over $\mathsf N_\prt P$ while finding all the 
	available embedding,
	between two reactions, is in $\mathbf O(|G|^2\cdot|H|^3)$.
\end{theorem}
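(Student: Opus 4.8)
The plan is to bound the traffic as the product of three factors: (i) the number of distinct atoms and edges that can ever be generated for a fixed guest $G$, (ii) the sum over all guests, and (iii) the number of links each such item traverses in $\mathsf N_\prt P$. First I would recall from Proposition~\ref{prop:almost-atomic} and the complexity discussion above that the atom graph $\Gamma_G$ for a single guest $G$ has $\mathbf O(|G|^2\cdot|H|^2)$ nodes and edges, since atoms are essentially pairs from $G\times H$ (more precisely a component of $G$ together with its image, which is a subset of $H$ of size bounded again by $|H|$, but still accounted for in the $|G|^2\cdot|H|^2$ bound once co-restrictions to single target components are used as the atoms of Proposition~\ref{prop:almost-atomic}). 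Each process transmits a given atom or edge only when it \emph{adds or removes} it from its local $\Gamma_{Q,G}$ (see \hyperref[algo:suggest]{\publish} and \hyperref[algo:retract]{\withhold}), so over the interval between two reactions each such item is the payload of at most one suggest and at most one retract crossing any fixed link.

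Next I would invoke the preceding lemma: the number $L$ of links in $\mathsf N_\prt P$ is in $\mathbf O(|H|)$. Combining the two observations, for a fixed guest $G$ the total number of (atom, link) and (edge, link) transmission events is in $\mathbf O\big((|G|^2\cdot|H|^2)\cdot|H|\big)=\mathbf O(|G|^2\cdot|H|^3)$; since each message payload item has constant size (an atom is a bounded pair, an edge a bounded pair of atoms, modulo the fixed maximal arity of $\Sigma$), the traffic for $G$ is in $\mathbf O(|G|^2\cdot|H|^3)$. Summing over the fixed finite set $Guests$ — whose cardinality and whose individual sizes are constants of the D-BAM, not part of the problem instance, so $\sum_{G}|G|^2 = \mathbf O(|G|^2)$ if we read $|G|$ as the total size of the guest collection — gives the claimed bound $\mathbf O(|G|^2\cdot|H|^3)$.

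The main obstacle, and the step to carry out carefully, is justifying that \emph{every edge travels a link at most once} in each direction between two reactions: this relies on the monotone-until-retraction behaviour of $\Gamma_{Q,G}$ together with causal ordering of messages, so that a process never re-suggests an atom it already holds (the guard $A\neq A''\lor E\neq E''$ in \hyperref[algo:suggest]{\publish}) and never re-retracts one it has already removed (the analogous guard in \hyperref[algo:retract]{\withhold}); one must also check that the adjacency-based routing of Definition~\ref{def:adj-routing} only \emph{reduces} traffic, never forcing a payload along a link more than the unrouted protocol would. A secondary subtlety is that atoms whose target is a subset of $H$ (those for guest sites and inner names) could a priori be large; here one appeals to the fact that the atoms actually circulated are the co-restrictions $\corestr{\phi}{w}$ of Proposition~\ref{prop:almost-atomic}, each pinned to a single host component $w$, keeping the per-atom size constant and the count within $\mathbf O(|G|\cdot|H|)$ per domain element, hence $\mathbf O(|G|^2\cdot|H|^2)$ overall, as used above. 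Finally I would remark, matching the sentence preceding the theorem, that a centralized algorithm in its best case must still read all of $H$ and in general inspect $\mathbf O(|G|^2\cdot|H|^2)$ candidate atoms, so the extra $|H|$ factor is exactly the price of the overlay diameter, and no asymptotic penalty is paid in the worst case.
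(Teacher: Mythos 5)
Your proposal is correct and follows essentially the same route as the paper, whose argument is given in the prose surrounding the theorem rather than in a formal proof: the atom-graph representation has size $\mathbf O(|G|^2\cdot|H|^2)$, each atom or edge crosses any given link at most once between two reactions (because \hyperref[algo:suggest]{\publish} and \hyperref[algo:retract]{\withhold} only forward genuine changes to $\Gamma_{Q,G}$), and the preceding lemma bounds the number of links of $\mathsf N_\prt P$ by $\mathbf O(|H|)$, so the product gives $\mathbf O(|G|^2\cdot|H|^3)$. Your additional care about co-restricted atoms having constant size and about adjacency-based routing only reducing traffic makes explicit what the paper leaves implicit, but it does not change the structure of the argument.
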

This scenario corresponds to bigraphs and partitions forcing information to
traverse all the network. In fact, the algorithm sends atoms only to
processes that can effectively benefit from it and hence their propagation
is stopped as soon as possible while retaining correctness and completeness.

In a typical scenario guests are fixed over time (hence a constant) and
$|H|$ outmatches $|G|$ by orders of magnitude. Moreover, embeddings
unaffected by a reactions are not recomputed.

\section{Conclusions and future work}\label{sec:concl}

In this paper we have presented a D-BAM, an abstract machine for
executing BRSs in a distributed environment.  The core novelty of
this machine is an algorithm for computing bigraph embeddings in a
distributed environment where the host bigraph is spread across
several cooperating peers.  Differently from existing algorithms
\cite{gdbh:implmatch,mp:memo14,sevegnani2010sat}, this one is
completely decentralized and does not to have a complete view of the
global state in any process in the system; hence it can scale to
handle bigraphs too large to reside on a single process/machine.

As in any distributed system, the complexity of our algorithm is
rendered by the number and the size of exchanged messages (i.e., the
\emph{network footprint}).  On one hand, the number of messages needed
for computing an embedding is linearly bounded by the size of the
embedded bigraph (which usually is constant during execution) and the
depth of the parent map of the host.  The worst case
(Theorem~\ref{thm:traffic}) is when the overlay network of processes
is a list, and atoms have to traverse it entirely. This case happens
for bigraphs and embeddings that can be seen as ``pathological'' in
the context of BRS; this suggests to consider different encodings of
the model into the BRS in order to improve locality of reactions.  On
the other hand, the size of messages depends on internal isomorphisms
in the guest and host bigraphs: these symmetries yield a combinatorial
explosion of the possible embeddings, leading to larger messages to be
exchanged between processes. This is mitigated by the heuristics
presented in Section~\ref{sec:heu}.  A possible future work is to
perform a formal analysis of locality and isomorphisms and their
impact in the context of smoothed complexity.

When a reaction is applied, it alters the distributed state and
inherently invalidates some of the partial embeddings computed by each
process.  Consistency of the state is guaranteed by reactions being
wrapped inside distributed transactions, but invalidated embeddings
are an unnecessary burden. To this end, we used a retraction
mechanisms as an \emph{asynchronous distributed garbage collection};
moreover, embeddings that are not affected by a reaction are not
recomputed.  We think that this approach is a good trade-off between
performance and consistency.  In fact, other solutions can be
implemented; for instance, invalidated embeddings can be collected
during the reaction commit phase; this offers the highest consistency
(the set of available embeddings will never contain invalid ones) at
the cost of slower reactions. On the other extreme of the spectrum,
invalidated embeddings are collected only when an inconsistency is
found by some process. Reactions are as fast as in presence of
asynchronous retractions but process data structures are heavily
polluted by invalid embeddings resulting in a higher rate of
aborted transactions i.e.~failed reactions.

An interesting feature of the bigraphical framework is that, given a
bigraph and a redex, we can calculate the \emph{minimal contexts}
(called IPOs)
needed to complete the bigraph in order to match the given redex.
Leveraging this property, a different, ``semi-distributed''
implementation of the bigraphical abstract machine has been proposed
in \cite{mmp:dais14}.  According to this algorithm, a process willing
to perform a rewrite has to
\begin{enumerate*}[label=\em(\arabic*)]
	\item
		collect a (suitable) view of the host
		bigraph from its neighbour processes; 
	\item 
		compute locally all the
		embeddings (i.e.~all possible reactions for 
		the given rewriting rule);
	\item
		apply the execution policy and start 
		a distributed rewriting inside a transaction.
\end{enumerate*} 
The existence of minimal contexts provide a bound to
the view a process has to collect at step 1. However, this bound is
outmatched by more substantial drawbacks, e.g.: parametric rules have
to be expanded into ground ones beforehand, and each process may end
up visiting (and copying) the entire bigraph.  Hence, we think that
the algorithm proposed in this paper outperformes the one in \cite{mmp:dais14}. 

A direct application of the distributed embedding algorithm is to
simulate, or execute, \emph{multi-agent systems}.  In
\cite{mmp:dais14} the authors propose a methodology for designing and
prototyping multi-agent systems with BRSs. Intuitively, the
application domain is modelled by a BRS and entities in its states are
divided as ``subjects'' and ``objects'' depending on their ability to
actively perform actions. Subjects are precisely the agents of the
system and reactions are reconfigurations.  This observation yields a
coherent way to partition and distribute a bigraph among the agents,
which can be assimilated to the processes of the distributed
bigraphical machine (execution policies are defined by agents desires
and goals).  Therefore, these agents can find and perform bigraph
rewritings in a truly concurrent, distributed fashion, by using the
distributed embedding algorithm presented in this paper.

Finally, we observe that the performance of the algorithm (and hence
of the D-BAM) depends on how the bigraph is partitioned and
distributed. For instance, it is easy to devise a situation in which
even relatively small guests require the cooperation of several
processes, say nearly one for each component of the guest. An
interesting line of research would be to study the relation between
guests, partitions, and performance in order to develop efficient
distribution strategies. Moreover, structured partitions lend
themselves to ad-hoc heuristics and optimizations.  As an example, the
way bigraphs are distributed among agents in \cite{mmp:dais14} takes
into account of their interactions and reconfigurations.

{\small

}

\end{document}